\documentclass[journal]{IEEEtran}
\usepackage{cite}
\usepackage{amsmath,amssymb,amsfonts,bm}
\usepackage{graphicx}
\usepackage{booktabs}
\usepackage{multirow}
\usepackage{array}
\usepackage{algorithm}
\usepackage{algorithmic}
\usepackage[caption=false,font=footnotesize]{subfig}
\usepackage{url}
\usepackage{xcolor}
\usepackage{balance}
\usepackage[hidelinks]{hyperref}
\graphicspath{{figures/}}
\newtheorem{proposition}{Proposition}
\newtheorem{remark}{Remark}

\begin{document}

\title{Sparse Channel Acquisition in Aging Fluid Antenna Systems: A Reliability-Calibrated Posterior Reconstruction Approach}

\author{Xiaokai Song,
	Jie Tang, 
	Tuo Wu, 
	Geyang Xu,
	Zhongqing Wu,
	Baiyang Liu,
	Kin-Fai Tong,~\IEEEmembership{Fellow,~IEEE},\\
	and Naofal Al-Dhahir, \IEEEmembership{Fellow,~IEEE}
	
	\thanks{Xiaokai Song, Jie Tang, Tuo Wu, Geyang Xu, and Zhongqing Wu are with the School of Electronic and Information Engineering, South China University of Technology, Guangzhou 510640, China (e-mail: $\rm  songxiaokai@scut.edu.cn; eejtang@scut.edu.cn; wutuo@scut.edu.cn; \\ 202620111845@mail.scut.edu.cn; eezhongqing\_wu@mail.scut.edu.cn$).}
	\thanks{Baiyang Liu is with the School of Artificial Intelligence, Shenzhen Technology University, Shenzhen, China (e-mail:$\rm liubaiyang@sztu.edu.cn$).}
	\thanks{K.-F. Tong is with the School of Science and Technology, Hong Kong Metropolitan University, Hong Kong, China (e-mail:$\rm ktong@hkmu.edu.hk$).}
	\thanks{Naofal Al-Dhahir is with the Department of Electrical and Computer Engineering, The University of Texas at Dallas, Richardson, TX 75080 USA (e-mail: $\rm aldhahir@utdallas.edu$).}
}
\markboth{}%
{Song \MakeLowercase{\textit{et al.}}: Sparse Channel Acquisition in Aging Fluid Antenna Systems}

\maketitle

\begin{abstract}
Fluid antenna systems (FASs) exploit port-domain diversity within a compact aperture, but their gain can be lost if the receiver has to scan too many ports before data transmission. This paper studies sparse channel acquisition for aging FAS receivers, where the available information has mixed reliability: current observations are fresh but sparse, whereas a historical channel state information (CSI) is dense but stale due to Doppler-induced temporal decorrelation. We formulate the problem from a communication perspective by linking aperture-field reconstruction to port-selection rate, outage, probing budget, and online inference latency. We then cast the acquisition task as reliability-calibrated posterior reconstruction, in which the sparse current observations serve as hard measurement evidence and historical CSI is reused only as soft temporal side information. 
The resulting reliability-controlled conditional diffusion posterior (DCP) framework conditions the denoising process on the sparse current observations, the observation mask, and the historical CSI, and refines the reverse trajectory through observation consistency, mismatch-aware temporal gating, spatial trust weighting, progressive temporal activation, and warm initialization.
Analysis shows how aging and reconstruction error induce rate distortion and rate regret. Simulations across probing budgets, signal-to-noise ratios (SNRs), temporal correlations, observation patterns, ablation settings, and latency tests show consistent reconstruction gains with millisecond-level online inference latency, indicating that stale CSI can be beneficial when its strength, timing, and spatial support are explicitly controlled.
\end{abstract}

\begin{IEEEkeywords}
Fluid antenna system, channel aging, aperture-field reconstruction, conditional diffusion, posterior sampling, sparse channel acquisition, stale CSI.
\end{IEEEkeywords}

\section{Introduction}
\IEEEPARstart{F}{luid} antenna systems (FASs) exploit fine-grained spatial variation within a compact aperture through reconfigurable positions or radiation states, avoiding a large number of simultaneously active radio-frequency chains \cite{Lai2026JSAC,Shojaeifard2022ProcIEEE}. Their diversity and outage benefits are well established \cite{Wong2021TWC,New2024Outage,New2025Tutorial}, while recent studies emphasize aperture, correlation, switching, and hardware constraints \cite{Wu2025MWC,Lu2025MCOM,Meta-Fluid lot}. A central practical question is whether the aperture field can
be acquired accurately under practical acquisition overhead
constraints \cite{Ramirez2024TWC,Zhang2025SBAR}.

This issue is central to port selection and FAMA: exhaustive scanning improves channel knowledge but increases switching delay, whereas sparse probing reduces latency while leaving most ports unobserved \cite{New2026JSAC,Chai2022CL,Wong2022TWC,Wong2023SlowFAMA,Wong2023FastFAMA}. Learning-aided FAMA and cGAN-based selection reduce scanning through port correlation \cite{Waqar2023CL,Eskandari2024WCL}, but still depend on reliable inference of unobserved or outdated ports.

The same bottleneck appears in broader FAS and reconfigurable-aperture designs, but with different system objectives. FAS-CoNOMA uses port-domain flexibility together with code-domain multiple access, so the receiver must resolve both spatial and code-domain structure \cite{Wu2026CoNOMA}. Secrecy-oriented and scalable FAS studies show that port correlation and aperture modeling affect not only selection gain but also security and array-signal-processing behavior \cite{Wu2026Secrecy,Wu2026ScalableFAS}. Related work on artificial intelligence (AI)-native FAS and movable antennas reaches a similar conclusion from the learning and hardware perspectives: high-dimensional spatial information is valuable only when it can be represented and acquired under realistic overhead constraints \cite{Wang2024AIFAS,Zhu2024MAopportunities}. Thus, the acquisition problem is not a secondary implementation detail; it is part of the system model that determines whether port-domain flexibility can be effectively
exploited.

Aperture-field reconstruction addresses this problem by estimating the aperture field from incomplete measurements. Oversampling has been shown to be essential for FAS channel estimation and reconstruction, highlighting that dense-port acquisition cannot be treated as ordinary low-dimensional channel state information (CSI) estimation
\cite{New2025TWC}. Channel estimation for FAS-assisted multiuser millimeter-wave (mmWave) systems has also been investigated under structured propagation environments, where sparse measurements require structure-aware interpretation \cite{Xu2024CL}. More recently, sparsity-driven inverse scattering has been used to reconstruct spatial information from limited measurements in meta-fluid antenna systems \cite{Meta-Fluid}. A successive Bayesian reconstructor has further been developed, demonstrating the value of probabilistic recovery for FAS channel estimation \cite{Zhang2025SBAR}. Sparse Bayesian learning has also been applied directly to FAS estimation, further confirming that model-based sparse recovery is a natural baseline for this task \cite{Xu2025WCL}. These works establish the importance of reconstruction, but most of them focus on spatial recovery from current observations rather than on the joint use of sparse current observations and stale CSI.

Classical estimators remain useful but model dependent: linear estimation relies on reliable covariance and pilot design \cite{Hassibi2003,Biguesh2006}, whereas compressed sensing and SBL depend on dictionary or sparsity-model match \cite{Donoho2006,Tipping2001}.
Related large-scale (XL) array studies further demonstrate the value of
structure-aware recovery by exploiting sparsity induced by
dual-wideband effects, near-field propagation, and spatial
non-stationarity for Bayesian channel estimation \cite{NEW-XL-MIMO}, and hybrid spherical--planar structure associated with modular array geometry for low-complexity channel-parameter estimation and three-dimensional localization \cite{NEW-MODULAR-XL}. 
Gaussian process regression offers a principled way to interpolate correlated spatial samples, yet it relies on a kernel that may not capture both aperture-scale smoothness and local multipath-induced texture \cite{Rasmussen2006}. Learning-based reconstruction relaxes some of these assumptions. Graph masked autoencoders have been used for FAS channel extrapolation \cite{Haibin2024WCL}, while score-based and diffusion-based estimators can represent a distribution of plausible wireless channels instead of a single interpolation rule \cite{Arvinte2023TWC,Chen2025TCOM}. The diffusion literature further provides practical denoising
procedures and denoising diffusion implicit model (DDIM) sampling tools for such posterior-style generation \cite{Ho2020,Song2020}. However, these methods are usually built around either a synchronized sparse snapshot or side information that is assumed to be reliable.

Channel aging breaks the reliable-side-information assumption. Temporal correlation decays with Doppler and delay \cite{Clarke1968,Jakes1974}, consistent with time-evolving geometry-based models \cite{Jaeckel2014}. Thus, an aging FAS receiver combines fresh but sparse current observations with dense but potentially stale historical CSI, whose mismatch can bias unobserved ports. 
Hence, the receiver faces a reliability-calibrated inference problem with  a fresh but sparse current observation and dense but potentially mismatched historical CSI.

This paper studies that inference problem and proposes a reliability-controlled conditional diffusion posterior (DCP) framework. 
The main idea is to reconstruct the current aperture field by giving priority to sparse current observations at observed ports, while allowing historical CSI to guide only where it remains credible.
DCP conditions the denoising process on the sparse current observations, the observation mask, and the historical CSI, and controls its reverse trajectory through observation consistency, mismatch-driven temporal gating, spatial trust weighting, progressive temporal activation, and warm initialization. The sparse current observations anchor the posterior, and stale CSI is reused only under reliability control.

The main contributions are summarized as follows.
\begin{itemize}
    \item We develop a communication-oriented model for sparse acquisition in aging FAS receivers. The formulation connects the aperture-field reconstruction problem to port-selection rate, outage probability, probing budget, and online inference latency.
    \item We formulate sparse acquisition with mixed-reliability CSI and propose DCP, which combines conditional diffusion reconstruction with observation-consistent updates, sample-wise temporal gating, spatial trust weighting, progressive temporal activation, and reliability-aware warm initialization.
    \item We provide communication-oriented and sampler-level analysis. The communication analysis links channel aging and reconstruction error to rate distortion and rate regret, while the sampler analysis shows how observation guidance contracts measured-port residuals and how adaptive temporal weighting limits stale-prior bias.
    \item We validate the framework through probing budgets, signal-to-noise ratios (SNRs), temporal-correlation, ablation settings, observation-pattern, and latency tests, highlighting both normalized mean-square error (NMSE) gains and the runtime implications for online aperture-field reconstruction.
    
\end{itemize}

The rest of this paper is organized as follows. Section~\ref{sec:model} presents the aging FAS model, communication-oriented acquisition objective, and posterior formulation. Section~\ref{sec:method} gives the DCP sampler and reliability-control mechanisms. Section~\ref{sec:theory} provides analytical support for the proposed design. Section~\ref{sec:setup} describes the data construction and baselines. Section~\ref{sec:results} reports the numerical results, and Section~\ref{sec:conclusion} concludes the paper.

\section{System Model and Problem Formulation}\label{sec:model}

\begin{table*}[t]
\centering
\caption{Core Notation Used in Sections II and III}
\label{tab:notation}
\small
\setlength{\tabcolsep}{3.5pt}
\renewcommand{\arraystretch}{1.08}
\begin{tabular}{@{}p{0.12\textwidth} p{0.35\textwidth} p{0.12\textwidth} p{0.35\textwidth}@{}}
\toprule
Symbol & Definition & Symbol & Definition \\
\midrule
$N_h,N_w$ & Numbers of FAS ports along the two aperture dimensions. &
$\mathbf{H}^{\mathrm{c}}_t,\mathbf{H}_t$ & Complex aperture field and its real--imaginary stacked form at slot $t$. \\
$\mathbf{H}^{\mathrm{old}}_t$ & Dense historical CSI available at slot $t$. &
$\rho_t$ & Slot-to-slot temporal correlation coefficient. \\
$d,\alpha_t$ & Historical-CSI age in slots and additional fidelity coefficient. &
$\mathbf{M}_t,\mathbf{Y}_t$ & Binary observation mask and sparse current observation. \\
$\eta_t$ & Probing budget ratio, $\|\mathbf{M}_t\|_0/(N_hN_w)$. &
$\bar{\gamma},R_{t,n}$ & Receive-SNR scaling factor and achievable rate of port $n$. \\
$n_t^\star,\widehat n_t$ & Oracle and reconstruction-based selected port indices. &
$\widehat{\mathbf{H}}_t$ & Reconstructed current aperture field. \\
$\mathbf{C}_t$ & DCP condition tensor formed from $\mathbf{H}^{\mathrm{old}}_t$, $\mathbf{Y}_t$, and $\mathbf{M}_t$. &
$\lambda_{\mathrm{obs}}$ & Observation-guidance weight for measured-port anchoring. \\
$\eta_t^{\mathrm{m}},g_t$ & Historical/current mismatch metric and sample-wise temporal gate. &
$\mathbf{T}_t$ & Spatial trust map for reliability-controlled historical-CSI reuse. \\
$r_k,r_0$ & Reverse-sampling progress ratio and temporal-activation threshold. &
$\lambda_{\mathrm{tmp}}^{(k)}$ & Progressive temporal-regularization weight at reverse step $k$. \\
$\xi_t$ & Gate-controlled warm-start coefficient. &
$\mathcal{K},S,L$ & Selected DDIM step set, retained reverse-step count, and posterior-sample count. \\
\bottomrule
\end{tabular}
\end{table*}

\subsection{Two-Dimensional FAS Aperture-Field Representation}

Consider a single-user receiving FAS whose physical aperture is discretized into an $N_h\times N_w$ grid of switchable positions. Let the aperture size be $W_h\lambda_{\mathrm{c}}\times W_w\lambda_{\mathrm{c}}$, where $\lambda_{\mathrm{c}}$ is the carrier wavelength. The location of the $(p,q)$th port is
\begin{equation}
\mathbf{r}_{p,q}=
\left[
\frac{pW_h\lambda_{\mathrm{c}}}{N_h-1},
\frac{qW_w\lambda_{\mathrm{c}}}{N_w-1}
\right]^{\mathrm{T}},
\label{eq:port_location}
\end{equation}
where $p=0,\ldots,N_h-1$ and $q=0,\ldots,N_w-1$. Let $\mathbf{H}^{\mathrm{c}}_t\in\mathbb{C}^{N_h\times N_w}$ denote the complex aperture field at slot $t$, whose $(p,q)$th entry is the small-scale fading coefficient observed at $\mathbf{r}_{p,q}$. For neural processing, we represent it by the real-imaginary stacked tensor
\begin{equation}
\mathbf{H}_t = \mathrm{Stack}\!\left(\Re\{\mathbf{H}^{\mathrm{c}}_t\},\Im\{\mathbf{H}^{\mathrm{c}}_t\}\right) \in \mathbb{R}^{2\times N_h \times N_w}.
\label{eq:Hdef}
\end{equation}
This representation is lossless and converts the complex aperture field into a real-valued tensor suitable for convolutional processing. The physical basis for sparse probing is the spatial correlation among nearby ports. Under the classical isotropic rich-scattering model, the second-order spatial covariance between two aperture locations $\mathbf{r}_i$ and $\mathbf{r}_j$ can be written as\cite{Clarke1968}
\begin{equation}
[\mathbf{R}_{\mathrm{s}}]_{i,j}
=\mathbb{E}\!\left[H_t^{\mathrm{c}}(\mathbf{r}_i)\left(H_t^{\mathrm{c}}(\mathbf{r}_j)\right)^*\right]
=\sigma_h^2 J_0\!\left(\frac{2\pi\|\mathbf{r}_i-\mathbf{r}_j\|_2}{\lambda_{\mathrm{c}}}\right),
\label{eq:spatialcorr}
\end{equation}
where $J_0(\cdot)$ is the zeroth-order Bessel function of the first kind and $\sigma_h^2$ is the average channel power. Equation~\eqref{eq:spatialcorr} is not required by the proposed algorithm, but it clarifies the wireless origin of the aperture-field structure: densely spaced ports do not represent statistically independent channel samples, and their correlation depends on their physical separation in wavelengths.

\subsection{Channel Aging and Historical CSI}
The aperture field evolves over time. We model its slot-to-slot dynamics as
\begin{equation}
\mathbf{H}^{\mathrm{c}}_t = \rho_t\mathbf{H}^{\mathrm{c}}_{t-1}+\sqrt{1-\rho_t^2}\,\mathbf{U}^{\mathrm{c}}_t,
\label{eq:aging}
\end{equation}
where $\rho_t\in[0,1]$ is the temporal correlation coefficient and $\mathbf{U}^{\mathrm{c}}_t$ is a zero-mean innovation field independent of $\mathbf{H}^{\mathrm{c}}_{t-1}$ with the same spatial covariance structure as the aperture field. For a slot duration $T_{\mathrm{s}}$ and maximum Doppler frequency $f_{\mathrm{D}}=v/\lambda_{\mathrm{c}}$, the Clarke-Jakes model gives the temporal correlation
\begin{equation}
\rho_t=\left|J_0\!\left(2\pi f_{\mathrm{D}}T_{\mathrm{s}}\right)\right|,
\label{eq:doppler_rho}
\end{equation}
where $v$ is the receiver speed. This relation makes the aging effect explicit: larger mobility or a longer acquisition delay reduces the usefulness of previously observed port channels. The first-order model in \eqref{eq:aging} is therefore used as a controlled statistical description of temporal mismatch rather than as an exact propagation law.

The available temporal side information is not assumed to equal the previous-slot field. Instead, we model the historical CSI as
\begin{equation}
\mathbf{H}^{\mathrm{old,c}}_t = \alpha_t\mathbf{H}^{\mathrm{c}}_{t-d}+\sqrt{1-\alpha_t^2}\,\mathbf{S}^{\mathrm{c}}_t+\mathbf{N}^{\mathrm{old,c}}_t,
\label{eq:hold}
\end{equation}
where $d\in\mathbb N^{+}$ denotes the age of the historical CSI in slots, such that $\mathbf H_{t-d}^{c}$ is the aperture field generated $d$ slots before the current slot. Here, $d$ determines the source slot of the historical CSI, while $\alpha_t$ is a historical-CSI fidelity coefficient that controls only the additional degradation caused by prediction, storage, scanning, calibration, or residual modeling uncertainty. The temporal decorrelation between $\mathbf H_t^{c}$ and $\mathbf H_{t-d}^{c}$ is already induced by the channel evolution in \eqref{eq:aging} and is not applied again through $\alpha_t$. $\mathbf{S}^{\mathrm{c}}_t$ is a structured perturbation term, and $\mathbf{N}^{\mathrm{old,c}}_t$ denotes historical CSI noise. Its real-imaginary stacked form is denoted by $\mathbf{H}^{\mathrm{old}}_t$. Thus, the historical CSI carries temporal information but may deviate from the current aperture field.

\begin{remark}
The receiver does not assume access to the current aperture field. It observes only the sparse current observations in \eqref{eq:obsmodel} and a dense historical CSI generated before the current slot. This distinction is crucial: the historical CSI can help infer unobserved ports, but using it as current ground truth would ignore Doppler aging and can bias the selected port.
\end{remark}

\subsection{Sparse Observation Model}
At slot $t$, only a small subset of ports can be probed. Let $\mathbf{M}_t\in\{0,1\}^{1\times N_h\times N_w}$ denote the binary observation mask. The resulting sparse observation is
\begin{equation}
\mathbf{Y}_t = \mathbf{M}_t\odot(\mathbf{H}_t+\mathbf{W}_t),
\label{eq:obsmodel}
\end{equation}
where $\odot$ denotes elementwise multiplication and $\mathbf{W}_t$ is observation noise. The probing budget ratio is defined as
\begin{equation}
\eta_t = \frac{\|\mathbf{M}_t\|_0}{N_hN_w}.
\label{eq:budget}
\end{equation}
The reconstruction task is to estimate $\mathbf{H}_t$ from $\mathbf{Y}_t$, $\mathbf{M}_t$, and $\mathbf{H}^{\mathrm{old}}_t$.

\subsection{Communication-Oriented Acquisition Objective}
The reason for reconstructing the aperture field is not merely to reduce an image-like error metric. In an FAS receiver, the reconstructed aperture field is used to activate a port for data transmission. Let $N=N_hN_w$, and let $h_{t,n}$ denote the complex channel coefficient at the $n$th port after vectorizing $\mathbf{H}^{\mathrm{c}}_t$. When port $n$ is activated, the received signal is
\begin{equation}
r_{t,n}=\sqrt{P}h_{t,n}s_t+v_{t,n},
\label{eq:rx_signal}
\end{equation}
where $P$ is the transmit power, $s_t$ is the unit-power data symbol, and $v_{t,n}\sim\mathcal{CN}(0,\sigma_v^2)$ is receiver noise. The corresponding receive SNR and achievable rate are
\begin{equation}
\gamma_{t,n}=\bar{\gamma}|h_{t,n}|^2,\qquad
R_{t,n}=\log_2(1+\gamma_{t,n}),
\label{eq:rate_def}
\end{equation}
where $\bar{\gamma}=P/\sigma_v^2$. If the current aperture field were known, the receiver would select
\begin{equation}
n_t^\star=\arg\max_{1\le n\le N}R_{t,n}
=\arg\max_{1\le n\le N}|h_{t,n}|^2 .
\label{eq:oracle_port}
\end{equation}
With sparse acquisition, however, the receiver must choose a port from imperfect information. A reconstruction rule $\mathcal{R}$ maps $(\mathbf{Y}_t,\mathbf{M}_t,\mathbf{H}^{\mathrm{old}}_t)$ to $\widehat{\mathbf{H}}_t$, and the activated port can be written as
\begin{equation}
\widehat{n}_t=\Pi(\widehat{\mathbf{H}}_t),
\label{eq:selected_port}
\end{equation}
where $\Pi(\cdot)$ is the port-selection rule. For maximum-rate selection from the reconstructed aperture field, $\Pi(\widehat{\mathbf{H}}_t)=\arg\max_n|\widehat{h}_{t,n}|^2$.

Hence, the communication objective is to choose an acquisition and reconstruction rule that preserves the utility of the selected port under pilot and latency constraints. A representative formulation is
\begin{align}
\text{(P0)}\quad
\min_{\mathcal{R},\Pi}\;&
\mathbb{E}\!\left[R_{t,n_t^\star}-R_{t,\widehat{n}_t}\right] \notag\\
\text{s.t.}\;&
\mathbf{P}\!\left(\gamma_{t,\widehat{n}_t}<\gamma_{\mathrm{th}}\right)\le \epsilon_{\mathrm{out}},\notag\\
&\|\mathbf{M}_t\|_0\le P_{\max},\qquad T_{\mathrm{inf}}\le T_{\max}.
\label{eq:comm_problem}
\end{align}
Here $\gamma_{\mathrm{th}}$ is the outage threshold, $P_{\max}$ is the probing budget, and $T_{\mathrm{inf}}$ is the online inference latency. Problem (P0) makes the role of reconstruction precise: normalized mean-square error is a useful surrogate, but the communication system ultimately cares about rate regret, outage, and whether the best-port region is preserved.

The next result quantifies why stale CSI cannot be used as if it were current CSI.
\begin{proposition}[Aging-induced rate distortion]\label{prop:aging_rate}
Assume that, for a fixed port, $h_t=\rho_d h_{t-d}+\sqrt{1-\rho_d^2}u_t$, where $\rho_d\in[0,1]$ denotes the effective temporal correlation between slots $t-d$ and $t$, and $h_{t-d}$ and $u_t$ are independent proper complex Gaussian variables with variance $\sigma_h^2$. Let $R_t=\log_2(1+\bar{\gamma}|h_t|^2)$ and $R_{t-d}=\log_2(1+\bar{\gamma}|h_{t-d}|^2)$. Then
\begin{equation}
\mathbb{E}\!\left[|R_t-R_{t-d}|\right]
\le
\frac{2\sqrt{2}\bar{\gamma}\sigma_h^2}{\ln 2}\sqrt{1-\rho_d}.
\label{eq:aging_rate_bound}
\end{equation}
\end{proposition}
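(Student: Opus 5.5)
We bound the rate difference by a Lipschitz-type inequality and then control the channel-gain difference with second moments, relying on the Gaussian structure of the aging model.

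\emph{Step 1 (Lipschitz reduction).} For $x,y\ge 0$, the mean-value theorem applied to $f(z)=\log_2(1+\bar{\gamma}z)$, whose derivative is bounded by $\bar{\gamma}/\ln 2$, gives
\begin{equation*}
|R_t-R_{t-d}|\le \frac{\bar{\gamma}}{\ln 2}\,\bigl||h_t|^2-|h_{t-d}|^2\bigr|.
\end{equation*}

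\emph{Step 2 (factorization).} Write $\bigl||h_t|^2-|h_{t-d}|^2\bigr| = \bigl||h_t|-|h_{t-d}|\bigr|\,(|h_t|+|h_{t-d}|)\le |h_t-h_{t-d}|\,(|h_t|+|h_{t-d}|)$. Cauchy--Schwarz then yields
\begin{equation*}
\mathbb{E}\bigl[\bigl||h_t|^2-|h_{t-d}|^2\bigr|\bigr]\le \sqrt{\mathbb{E}|h_t-h_{t-d}|^2}\,\sqrt{\mathbb{E}(|h_t|+|h_{t-d}|)^2}.
\end{equation*}

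\emph{Step 3 (second moments).} Since $h_{t-d}$ and $u_t$ are independent and zero-mean, $\mathbb{E}|h_t|^2=\rho_d^2\sigma_h^2+(1-\rho_d^2)\sigma_h^2=\sigma_h^2$ and $\mathbb{E}[h_t h_{t-d}^*]=\rho_d\sigma_h^2$. Hence
\begin{equation*}
\mathbb{E}|h_t-h_{t-d}|^2=2\sigma_h^2-2\rho_d\sigma_h^2=2\sigma_h^2(1-\rho_d).
\end{equation*}
For the other factor, $(a+b)^2\le 2a^2+2b^2$ gives $\mathbb{E}(|h_t|+|h_{t-d}|)^2\le 4\sigma_h^2$.

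\emph{Step 4 (combine).} Multiplying the two square roots gives $\sqrt{2\sigma_h^2(1-\rho_d)}\cdot 2\sigma_h = 2\sqrt{2}\,\sigma_h^2\sqrt{1-\rho_d}$. Multiplying by $\bar{\gamma}/\ln 2$ reproduces exactly the right-hand side of \eqref{eq:aging_rate_bound}.

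The proof is short, and the only delicate point is keeping the constants tight enough to match the statement. That requires applying Cauchy--Schwarz to the factored product in Step 2, rather than bounding $\mathbb{E}\bigl||h_t|^2-|h_{t-d}|^2\bigr|$ via fourth moments, and using the crude bound $4\sigma_h^2$ for $\mathbb{E}(|h_t|+|h_{t-d}|)^2$. The cross term $\mathbb{E}[h_t h_{t-d}^*]=\rho_d\sigma_h^2$ relies on $\rho_d\ge 0$ being real, which the statement assumes.
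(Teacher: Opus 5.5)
Your proof is correct and follows the paper's argument step for step: the $\bar{\gamma}/\ln 2$-Lipschitz reduction, the factorization $\bigl||h_t|^2-|h_{t-d}|^2\bigr|\le|h_t-h_{t-d}|(|h_t|+|h_{t-d}|)$, Cauchy--Schwarz, and the second-moment facts $\mathbb{E}|h_t-h_{t-d}|^2=2\sigma_h^2(1-\rho_d)$ and $\mathbb{E}(|h_t|+|h_{t-d}|)^2\le 4\sigma_h^2$. One correction to your closing remark: the fourth-moment route does not lose constants, because for jointly proper Gaussians $\mathbb{E}(|h_t|^2-|h_{t-d}|^2)^2=2\sigma_h^4(1-\rho_d^2)\le 4\sigma_h^4(1-\rho_d)$, and then $\mathbb{E}|X|\le\sqrt{\mathbb{E}X^2}$ gives the smaller constant $2$ in place of $2\sqrt{2}$ (your argument, like the paper's, uses only zero mean and uncorrelatedness, not Gaussianity).
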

\noindent\textit{Proof:}
Since $x\mapsto\log_2(1+\bar{\gamma}x)$ is $\bar{\gamma}/\ln2$-Lipschitz for $x\ge0$,
\begin{equation}
|R_t-R_{t-d}|
\le \frac{\bar{\gamma}}{\ln2}\left||h_t|^2-|h_{t-d}|^2\right|.
\end{equation}
Moreover,
\begin{equation}
\left||h_t|^2-|h_{t-d}|^2\right|
\le |h_t-h_{t-d}|(|h_t|+|h_{t-d}|).
\end{equation}
By the Cauchy--Schwarz inequality, $\mathbb{E}|h_t-h_{t-d}|^2=2\sigma_h^2(1-\rho_d)$ and $\mathbb{E}(|h_t|+|h_{t-d}|)^2\le4\sigma_h^2$. Combining these terms gives \eqref{eq:aging_rate_bound}. $\hfill\blacksquare$

Proposition~\ref{prop:aging_rate} shows that the communication distortion caused by stale CSI grows with temporal decorrelation. Thus, the historical CSI should be treated as a reliability-dependent memory, not as a second set of sparse current observations.

The following result connects reconstruction accuracy to the port-selection objective in (P0).
\begin{proposition}[Reconstruction-to-rate regret]\label{prop:rate_regret}
Let $\widehat{h}_{t,n}$ be the reconstructed channel at port $n$, and suppose that
\begin{equation}
\max_{1\le n\le N}|\widehat{h}_{t,n}-h_{t,n}|\le \varepsilon,\qquad
\max_{1\le n\le N}|h_{t,n}|\le A .
\end{equation}
If $\widehat{n}_t=\arg\max_n|\widehat{h}_{t,n}|^2$, then the rate regret satisfies
\begin{equation}
0\le R_{t,n_t^\star}-R_{t,\widehat{n}_t}
\le
\frac{4\bar{\gamma}}{\ln2}\left(A\varepsilon+\varepsilon^2\right).
\label{eq:rate_regret_bound}
\end{equation}
Furthermore, if the gain gap between the strongest and second-strongest true ports is $\delta_t$, exact top-port recovery is guaranteed whenever $\delta_t>2\varepsilon(2A+\varepsilon)$.
\end{proposition}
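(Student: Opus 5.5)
The plan is to reduce both claims to a perturbation bound on squared magnitudes and then use optimality of $\widehat n_t$ for the reconstructed field. Set $g_n=|h_{t,n}|^2$ and $\widehat g_n=|\widehat h_{t,n}|^2$.

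\textbf{Step 1: uniform gain perturbation.} For every port, the reverse triangle inequality gives $\bigl||\widehat h_{t,n}|-|h_{t,n}|\bigr|\le|\widehat h_{t,n}-h_{t,n}|\le\varepsilon$. Factoring the difference of squares then yields
\begin{equation*}
|\widehat g_n-g_n|\le \varepsilon\,(|\widehat h_{t,n}|+|h_{t,n}|)\le \varepsilon(2A+\varepsilon),
\end{equation*}
where we use $|\widehat h_{t,n}|\le A+\varepsilon$.

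\textbf{Step 2: regret bound.} Nonnegativity is immediate, since $n_t^\star$ maximizes $R_{t,n}$. For the upper bound, use the Lipschitz property already invoked in the proof of Proposition~\ref{prop:aging_rate}: $x\mapsto\log_2(1+\bar\gamma x)$ is $\bar\gamma/\ln2$-Lipschitz on $x\ge0$. It therefore suffices to bound $g_{n^\star}-g_{\widehat n}$. Inserting the reconstructed gains and using $\widehat g_{\widehat n}\ge\widehat g_{n^\star}$, write
\begin{equation*}
g_{n^\star}-g_{\widehat n}=(g_{n^\star}-\widehat g_{n^\star})+(\widehat g_{n^\star}-\widehat g_{\widehat n})+(\widehat g_{\widehat n}-g_{\widehat n})\le 2\varepsilon(2A+\varepsilon).
\end{equation*}
This gives a regret of at most $\frac{\bar\gamma}{\ln2}(4A\varepsilon+2\varepsilon^2)$. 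That quantity is at most $\frac{4\bar\gamma}{\ln2}(A\varepsilon+\varepsilon^2)$, so the stated bound follows with slack.

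\textbf{Step 3: exact top-port recovery.} Take any $n\neq n^\star$. Then $g_{n^\star}-g_n\ge\delta_t$, and Step 1 gives
\begin{equation*}
\widehat g_{n^\star}-\widehat g_n\ge \delta_t-2\varepsilon(2A+\varepsilon)>0.
\end{equation*}
Hence $n^\star$ is the unique maximizer of $\widehat g$, and $\widehat n_t=n_t^\star$.

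\textbf{Expected obstacle.} The only real care point is the constant bookkeeping. The perturbation bound must use $|\widehat h|\le A+\varepsilon$, which makes the quadratic term $\varepsilon^2$ rather than $2\varepsilon^2$. The recovery threshold must also match $2\varepsilon(2A+\varepsilon)$ exactly, since it involves two ports each perturbed by $\varepsilon(2A+\varepsilon)$. Otherwise the argument is elementary.
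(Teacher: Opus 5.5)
Your proof is correct and is essentially the paper's argument. The exact-recovery step is identical. For the regret, you telescope the squared gains through your Step~1 bound and obtain $4A\varepsilon+2\varepsilon^2$, whereas the paper works with magnitudes $a_n=|h_{t,n}|$, shows $a_{n_t^\star}\le a_{\widehat{n}_t}+2\varepsilon$, and then expands the square to obtain $4A\varepsilon+4\varepsilon^2$.

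As an aside, factoring instead of expanding in the magnitude route, $a_{n_t^\star}^2-a_{\widehat{n}_t}^2=(a_{n_t^\star}-a_{\widehat{n}_t})(a_{n_t^\star}+a_{\widehat{n}_t})\le 2\varepsilon\cdot 2A$, removes the $\varepsilon^2$ term entirely, because it never needs the bound $|\widehat{h}_{t,n}|\le A+\varepsilon$.
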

\noindent\textit{Proof:}
Let $a_n=|h_{t,n}|$ and $\widehat{a}_n=|\widehat{h}_{t,n}|$. Since $|\widehat{a}_n-a_n|\le\varepsilon$ and $\widehat{a}_{\widehat{n}_t}\ge\widehat{a}_{n_t^\star}$, we have $a_{n_t^\star}\le a_{\widehat{n}_t}+2\varepsilon$. Hence
\begin{equation}
a_{n_t^\star}^2-a_{\widehat{n}_t}^2
\le 4A\varepsilon+4\varepsilon^2.
\end{equation}
The Lipschitz property of $\log_2(1+\bar{\gamma}x)$ then yields \eqref{eq:rate_regret_bound}. For the exact-recovery statement, note that the gain-estimation error obeys
\begin{equation}
\left||\widehat{h}_{t,n}|^2-|h_{t,n}|^2\right|
\le \varepsilon(2A+\varepsilon),
\end{equation}
so the ordering between the strongest port and all other ports is preserved if the true gain gap exceeds twice this error bound. $\hfill\blacksquare$

Proposition~\ref{prop:rate_regret} explains why aperture-field reconstruction must preserve local peaks and not only reduce average error. It also motivates a posterior formulation: the receiver should combine the sparse current observations, historical CSI, and an aperture-field prior to reduce the uncertainty of the ports that can affect the final communication decision.

\subsection{Reliability-Calibrated Posterior Objective}
Rather than treating the task as one-shot deterministic regression, we formulate aperture-field reconstruction under sparse observations as reliability-calibrated posterior inference:
\begin{equation}
\widehat{\mathbf{H}}_t \in \mathcal{S}\left[p\left(\mathbf{H}_t\mid \mathbf{Y}_t,\mathbf{M}_t,\mathbf{H}^{\mathrm{old}}_t\right)\right],
\label{eq:posterior}
\end{equation}
where $\mathcal{S}[\cdot]$ denotes posterior sampling or a posterior-mean approximation. By Bayes' rule, the posterior can be factorized as
\begin{align}
p(\mathbf{H}_t|\mathbf{Y}_t,\mathbf{M}_t,\mathbf{H}^{\mathrm{old}}_t)
&\propto
p(\mathbf{Y}_t|\mathbf{H}_t,\mathbf{M}_t) \notag\\
&\quad\times
p(\mathbf{H}^{\mathrm{old}}_t|\mathbf{H}_t,\mathbf{M}_t)
p_\theta(\mathbf{H}_t),
\label{eq:posterior_factor}
\end{align}
where $p_\theta(\mathbf{H}_t)$ is the learned aperture-field prior. The observation likelihood follows from \eqref{eq:obsmodel}. For real-stacked aperture fields with noise variance $\sigma_w^2$, it can be written as
\begin{equation}
-\log p(\mathbf{Y}_t|\mathbf{H}_t,\mathbf{M}_t)
=\frac{1}{2\sigma_w^2}\left\|\mathbf{M}_t\odot(\mathbf{H}_t-\mathbf{Y}_t)\right\|_F^2+c_1.
\label{eq:obs_likelihood}
\end{equation}
The historical CSI is not imposed as a hard constraint. Its reliability is represented instead by a precision field $\bm{\Lambda}_t$, yielding the aging-aware temporal likelihood
\begin{align}
-\log p(\mathbf{H}^{\mathrm{old}}_t|\mathbf{H}_t,\mathbf{M}_t)
&=\frac{1}{2}\left\|(\mathbf{1}-\mathbf{M}_t)\odot\bm{\Lambda}_t^{1/2}\right. \notag\\
&\quad\left.\odot(\mathbf{H}_t-\mathbf{H}^{\mathrm{old}}_t)\right\|_F^2+c_2.
\label{eq:hist_likelihood}
\end{align}
The mask $\mathbf{1}-\mathbf{M}_t$ gives priority to sparse current observations at measured entries and allows the historical CSI to regularize primarily unmeasured entries. The negative log-posterior is then, up to constants,
\begin{align}
\mathcal{J}(\mathbf{H}_t)&=
\frac{1}{2\sigma_w^2}\left\|\mathbf{M}_t\odot(\mathbf{H}_t-\mathbf{Y}_t)\right\|_F^2 \notag\\
&\quad+\frac{1}{2}\left\|(\mathbf{1}-\mathbf{M}_t)\odot\bm{\Lambda}_t^{1/2}\right. \notag\\
&\quad\left.\odot(\mathbf{H}_t-\mathbf{H}^{\mathrm{old}}_t)\right\|_F^2
-\log p_\theta(\mathbf{H}_t).
\label{eq:posterior_energy}
\end{align}
This decomposition provides a direct interpretation of the proposed method: the diffusion model supplies a learned prior, the observation term enforces consistency with the sparse current observations, and the temporal term reuses stale CSI under reliability control.
\section{Aging-FAS Posterior Reconstruction Algorithm}\label{sec:method}
DCP solves the posterior problem in \eqref{eq:posterior_energy} under sparse probing, stale CSI, and latency constraints. Offline, a conditional diffusion model learns aperture structure from the same side information used at inference; online, observation guidance anchors probed ports while reliability-controlled temporal guidance reuses historical CSI on unobserved ports.

\subsection{FAS Posterior Decomposition Under Sparse Port Probing}
The reverse diffusion trajectory is interpreted as an approximate sampler of the posterior in \eqref{eq:posterior_factor}. Let $\mathbf{X}$ denote an intermediate clean aperture-field estimate at a reverse step. From \eqref{eq:posterior_energy}, the posterior score admits the decomposition
\begin{align}
\nabla_{\mathbf{X}}\log p(\mathbf{X}|\mathbf{Y}_t,\mathbf{M}_t,\mathbf{H}^{\mathrm{old}}_t)
&=\nabla_{\mathbf{X}}\log p_\theta(\mathbf{X}) \notag\\
& -\frac{1}{\sigma_w^2}\mathbf{M}_t\odot(\mathbf{X}-\mathbf{Y}_t) \notag\\
& -(\mathbf{1}-\mathbf{M}_t)\odot\bm{\Lambda}_t
\odot(\mathbf{X}-\mathbf{H}^{\mathrm{old}}_t).
\label{eq:score_decomp}
\end{align}
The first term is learned by the diffusion denoiser and captures the aperture-field structure that links unobserved ports to observed ones. The second pulls the sample toward the ports measured in the current slot, and the third reuses the historical CSI according to an aging-dependent precision. In the FAS setting, these three terms correspond to spatial aperture regularity, fresh but sparse current observations, and dense but potentially outdated historical CSI.

To connect the precision field with observable quantities, we set
\begin{equation}
\bm{\Lambda}_t = \lambda_{\mathrm{tmp}}^{(k)} g_t \mathbf{T}_t,
\label{eq:lambda_field}
\end{equation}
where $g_t$ controls sample-wise temporal reliability and $\mathbf{T}_t$ controls spatially varying trust.
In implementation, a small mismatch between the historical CSI and the currently observed entries produces a large effective temporal weight, whereas a large mismatch suppresses that weight. This design uses the overlap between historical CSI and currently observed ports to decide whether the historical aperture peaks remain informative for the current slot.

\subsection{Aperture-Field Prior Conditioned on FAS Side Information}
The learned prior must preserve both aperture-scale correlation and local peaks relevant to port activation. We therefore use a conditional diffusion model whose denoiser receives the sparse current observations, historical CSI, and observation mask, explicitly distinguishing measured, missing, and temporal side information.

The condition tensor is constructed as
\begin{equation}
\mathbf{C}_t = \mathrm{Concat}\left(\mathbf{H}^{\mathrm{old}}_t,\mathbf{Y}_t,\mathbf{M}_t\right),
\label{eq:condinput}
\end{equation}
where the historical CSI and current observation each use two real-imaginary channels, while the mask uses one channel. Hence, the condition tensor provides historical CSI, sparse current observations, and measurement support.

In the forward diffusion process,
\begin{equation}
\mathbf{H}_{t,k}=\sqrt{\bar{\alpha}_k}\,\mathbf{H}_t+\sqrt{1-\bar{\alpha}_k}\,\bm{\epsilon},\qquad \bm{\epsilon}\sim\mathcal{N}(\mathbf{0},\mathbf{I}),
\label{eq:forwarddiff}
\end{equation}
where $k$ is the diffusion step and $\bar{\alpha}_k=\prod_{i=1}^k(1-\beta_i)$. The network $\bm{\epsilon}_\theta(\cdot)$ is trained by minimizing
\begin{equation}
\mathcal{L}_{\mathrm{cond}} = \mathbb{E}_{t,k,\bm{\epsilon}}\Big[\|\bm{\epsilon}-\bm{\epsilon}_\theta(\mathbf{H}_{t,k},k,\mathbf{C}_t)\|_F^2\Big].
\label{eq:lcond}
\end{equation}
This standard noise-prediction objective aligns offline learning with online aperture-field reconstruction by exposing the denoiser to the same side information available during inference. Operationally, it learns how complete aperture fields are distributed when only a subset of ports is freshly measured and the dense side information may be stale.

The denoiser is a two-dimensional U-Net-style encoder--decoder with time embeddings and skip connections \cite{Ronneberger2015}. As shown in Fig.~\ref{fig:network}, conditioning is supplied at every reverse step to preserve measured values and decision-relevant local aperture structure.

\begin{figure*}[t]
    \centering
    \includegraphics[width=0.92\textwidth]{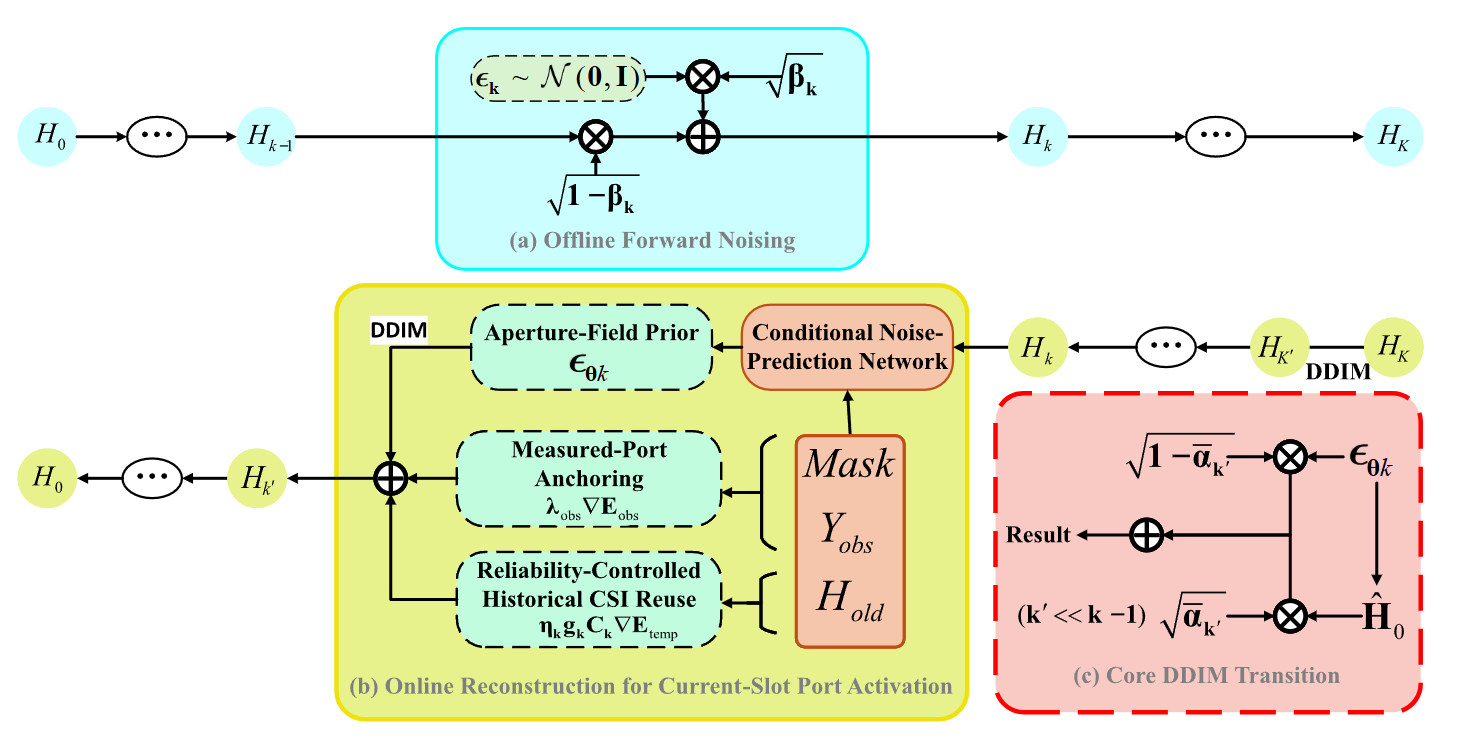}
    \caption{Overall framework of the proposed aging-aware conditional diffusion posterior reconstruction method. Offline, a conditional diffusion model learns the distribution of complete aperture fields. Online, the sparse current observations, the observation mask, and the historical CSI jointly guide posterior sampling so that the receiver can recover an aperture field suitable for current-slot port activation under limited probing.}
    \label{fig:framework}
\end{figure*}

\begin{figure}[t]
    \centering
    \includegraphics[width=\columnwidth]{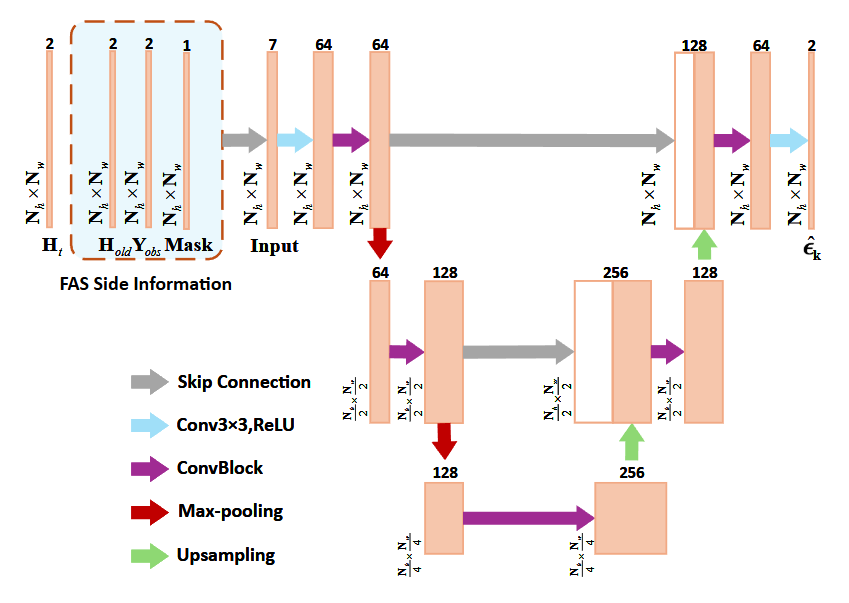}
    \caption{Conditional diffusion network architecture used for noise prediction. The network adopts a U-Net-style encoder--decoder with time-step embeddings and multi-scale feature propagation to preserve both global aperture structure and local field detail during denoising, which is important for maintaining decision-relevant port ranking in aperture-field reconstruction.}
    \label{fig:network}
\end{figure}

\subsection{Measured-Port Anchoring}
At inference time, the denoiser produces a clean-field estimate
\begin{equation}
\widehat{\mathbf{H}}^{(k)}_{t,0}=\mathcal{D}_\theta(\mathbf{H}_{t,k},k,\mathbf{C}_t).
\label{eq:cleanpred}
\end{equation}
This estimate is informed by the conditioning variables, but it is not guaranteed to match the measured entries exactly. In an FAS receiver, such drift is undesirable since the few ports probed in the current slot are the freshest direct evidence about the current aperture field.
Thus, we use the observation likelihood in \eqref{eq:obs_likelihood} as an explicit guidance term. For numerical stability across different probing budgets, the guidance energy is normalized as
\begin{equation}
\mathcal{E}_{\mathrm{obs}}^{(k)} = \frac{1}{2\sigma_w^2\sqrt{\|\mathbf{M}_t\|_0}}\left\|\mathbf{M}_t\odot\left(\widehat{\mathbf{H}}^{(k)}_{t,0}-\mathbf{Y}_t\right)\right\|_F^2.
\label{eq:eobs}
\end{equation}
The corresponding correction is
\begin{equation}
\widetilde{\mathbf{H}}^{(k)}_{t,0}=\widehat{\mathbf{H}}^{(k)}_{t,0}-\lambda_{\mathrm{obs}}\nabla \mathcal{E}_{\mathrm{obs}}^{(k)}.
\label{eq:obsgrad}
\end{equation}
This likelihood-gradient step corresponds to the observation term in \eqref{eq:score_decomp}. Observed ports act as hard evidence, while the diffusion prior controls the unmeasured region of the aperture field.

In addition to the gradient correction, we directly fuse the observed entries:
\begin{equation}
\overline{\mathbf{H}}^{(k)}_{t,0}=(\mathbf{1}-\mathbf{M}_t)\odot\widetilde{\mathbf{H}}^{(k)}_{t,0}+\mathbf{M}_t\odot\big[(1-a_t)\widetilde{\mathbf{H}}^{(k)}_{t,0}+a_t\mathbf{Y}_t\big],
\label{eq:obsfusion}
\end{equation}
where $a_t$ is a fusion factor. The gradient correction and direct fusion jointly limit drift at the currently measured entries, while the learned prior reconstructs the remaining aperture.

\subsection{Reliability-Controlled Reuse of Historical CSI}
Historical CSI is useful only when consistent with the current slot; otherwise, stale peaks can bias port ranking. DCP therefore estimates temporal reliability from pilot disagreement and controls historical reuse through sample-wise gating, spatial trust weighting, and progressive scheduling.

We first quantify the mismatch between the historical CSI and the sparse current observations through
\begin{equation}
\eta_t^{\mathrm{m}} = \frac{\|\mathbf{M}_t\odot(\mathbf{H}^{\mathrm{old}}_t-\mathbf{Y}_t)\|_F^2}{\|\mathbf{M}_t\odot\mathbf{Y}_t\|_F^2+\varepsilon},
\label{eq:etamismatch}
\end{equation}
where $\varepsilon$ is a small stabilizer. This quantity measures the normalized disagreement between the historical CSI and the currently observed ports. The sample-wise temporal gate is then defined as
\begin{equation}
 g_t = g_{\min} + (g_{\max}-g_{\min})\exp\!\left(-\frac{\eta_t^{\mathrm{m}}}{\tau_g}\right).
\label{eq:gt}
\end{equation}
Small mismatch keeps $g_t$ near $g_{\max}$, while large mismatch drives it toward $g_{\min}$ and weakens temporal correction. Thus, the receiver reuses historical CSI aggressively only when the currently observed ports indicate that the old aperture pattern still resembles the present one.

A scalar gate alone cannot distinguish spatial regions that remain reliable from those that have become obsolete. Hence, we construct a local discrepancy map
\begin{equation}
\mathbf{D}_t = \mathcal{B}\left(\mathbf{M}_t\odot\left|\mathbf{H}^{\mathrm{old}}_t-\mathbf{Y}_t\right|^2\right),
\label{eq:Dt}
\end{equation}
where $\mathcal{B}(\cdot)$ is a local smoothing operator that propagates observed disagreement to neighboring aperture locations. The spatial trust map is
\begin{equation}
\mathbf{T}_t = T_{\min} + (1-T_{\min})\exp\!\left(-\frac{\mathbf{D}_t}{\tau_T}\right).
\label{eq:Tt}
\end{equation}
Regions consistent with the sparse current observations receive larger temporal weights, while regions with larger historical-current mismatch receive smaller temporal weights.
This spatial selectivity is important as aging may shift or distort only part of the aperture field rather than the entire field uniformly.

The temporal correction is also scheduled progressively across the reverse diffusion trajectory. Let $r_k$ denote the reverse-sampling progress ratio. We activate temporal regularization only after a threshold $r_0$:
\begin{equation}
\lambda_{\mathrm{tmp}}^{(k)} =
\begin{cases}
0, & r_k<r_0,\\
\lambda_{\mathrm{tmp}}, & r_k\ge r_0.
\end{cases}
\label{eq:lambdatmp}
\end{equation}
This schedule lets early steps form a plausible current-slot field from the diffusion prior and the sparse current observations before historical regularization is activated. It reduces the chance that stale CSI dominates the reverse trajectory before the observed ports have established the present-slot structure.

Combining the temporal gate, the spatial trust map, and the progressive schedule, the historical-CSI-aware clean estimate is updated as
\begin{equation}
\mathbf{H}_{t,0}^{\star (k)} = \overline{\mathbf{H}}^{(k)}_{t,0} - \lambda_{\mathrm{tmp}}^{(k)} g_t (\mathbf{1}-\mathbf{M}_t)\odot \mathbf{T}_t \odot \left(\overline{\mathbf{H}}^{(k)}_{t,0}-\mathbf{H}^{\mathrm{old}}_t\right).
\label{eq:starclean}
\end{equation}
The correction applied only on unobserved ports; observed ports remain anchored by sparse current observations.   

\subsection{Low-Latency Sampling for Online FAS Receivers}
When the historical CSI is reliable, starting from pure Gaussian noise does not exploit the
available temporal structure. Hence, we use a gate-controlled warm initialization
\begin{equation}
\xi_t = \xi_{\min}+(\xi_{\max}-\xi_{\min})g_t,
\label{eq:xi}
\end{equation}
followed by
\begin{equation}
\mathbf{H}^{\mathrm{init}}_{t,K} = \xi_t\left(\sqrt{\bar{\alpha}_K}\,\mathbf{H}^{\mathrm{old}}_t+\sqrt{1-\bar{\alpha}_K}\,\mathbf{z}_1\right)+(1-\xi_t)\mathbf{z}_2,
\label{eq:warmstart}
\end{equation}
where $\mathbf{z}_1$ and $\mathbf{z}_2$ are Gaussian noise samples. A large $g_t$ initializes the sampler near a noisy historical field; a small $g_t$ recovers standard random initialization. For an online FAS receiver, this means that reliable historical CSI can provide a more informative initialization for obtaining a decision-ready aperture estimate along the fixed reverse trajectory.

For efficient inference, we use DDIM-based skipped sampling \cite{Song2020}. Let $\mathcal{K}=\{k_1,\dots,k_S\}$ denote the selected reverse steps with $k_1>\cdots>k_S=0$. The update can be summarized as
\begin{equation}
\mathbf{H}_{t,k_{s+1}} = \Phi_{\mathrm{DDIM}}\big(\mathbf{H}_{t,k_s},\mathbf{H}_{t,0}^{\star(k_s)},k_s,k_{s+1}\big),
\label{eq:ddimupdate}
\end{equation}
where $\Phi_{\mathrm{DDIM}}(\cdot)$ is the deterministic DDIM operator. Compared with a full-step stochastic reverse trajectory, DDIM sampling provides a favorable runtime--accuracy tradeoff for latency-sensitive online aperture-field reconstruction,  where the inference delay must remain small relative to the channel
coherence time and the probing overhead reduced by sparse
acquisition.

In general, averaging $L$ posterior samples yields the final estimate
\begin{equation}
\widehat{\mathbf{H}}_t = \frac{1}{L}\sum_{\ell=1}^{L}\mathbf{H}_{t,0}^{(\ell)}.
\label{eq:sampleavg}
\end{equation}
Increasing $L$ reduces sampling variance but increases latency. We set $L=1$ and use $25$ DDIM steps in the reported experiments. This choice reflects the intended operating point of an FAS receiver: produce one reliable current-slot aperture estimate quickly enough to support port activation, rather than spending extra latency to average many posterior draws. The dominant online cost is $\mathcal{O}(S\mathcal{C}_{\mathrm{net}})$, where $S$ is the number of retained DDIM steps and $\mathcal{C}_{\mathrm{net}}$ is the cost of one denoiser evaluation.

\begin{algorithm}[t]
\caption{Reliability-Controlled Reconstruction for Aging FAS Acquisition}
\label{alg:main}
\begin{algorithmic}[1]
\STATE \textbf{Input:} current observation $\mathbf{Y}_t$, observation mask $\mathbf{M}_t$, historical CSI $\mathbf{H}^{\mathrm{old}}_t$, condition tensor $\mathbf{C}_t$, reverse step set $\mathcal{K}$
\STATE Compare currently observed ports with historical CSI to compute mismatch metric $\eta_t^{\mathrm{m}}$ via \eqref{eq:etamismatch}, sample-wise temporal gate $g_t$ via \eqref{eq:gt}, and spatial trust map $\mathbf{T}_t$ via \eqref{eq:Tt}
\STATE Generate warm-start coefficient $\xi_t$ by \eqref{eq:xi} and initialize $\mathbf{H}_{t,K}^{\mathrm{init}}$ by \eqref{eq:warmstart}
\FOR{$s=1,2,\dots,S-1$}
    \STATE Predict the clean aperture field $\widehat{\mathbf{H}}_{t,0}^{(k_s)}$ with the conditional denoiser
    \STATE Anchor the currently observed ports using the observation-guidance correction in \eqref{eq:obsgrad}
    \STATE Fuse the current observations at the probed ports using \eqref{eq:obsfusion}
    \STATE Reuse historical CSI on unobserved ports through the reliability-controlled correction in \eqref{eq:starclean}
    \STATE Propagate to the next DDIM state through \eqref{eq:ddimupdate}
\ENDFOR
\STATE If multiple posterior samples are generated, average them using \eqref{eq:sampleavg}
\STATE \textbf{Output:} reconstructed full aperture field $\widehat{\mathbf{H}}_t$
\end{algorithmic}
\end{algorithm}

\section{Theoretical Validation of DCP}\label{sec:theory}
This section supports the proposed guidance design without claiming global optimality. We show that it implements meaningful per-step fusion, contracts measured-port residuals, and suppresses stale-CSI bias on unobserved entries.

\subsection{Per-Step Reliability-Calibrated Fusion}
To isolate the effect of the online guidance stage, fix a reverse step $k$ and freeze the denoiser prediction $\widehat{\mathbf{H}}_{t,0}^{(k)}$, the temporal gate $g_t$, and the trust map $\mathbf{T}_t$. Consider the surrogate objective
\begin{align}
\mathcal{Q}^{(k)}(\mathbf{X})
&=\frac{1}{2}\left\|\mathbf{X}-\widehat{\mathbf{H}}_{t,0}^{(k)}\right\|_F^2 \notag\\
&\quad+\frac{\alpha_t}{2}\left\|\mathbf{M}_t\odot(\mathbf{X}-\mathbf{Y}_t)\right\|_F^2 \notag\\
&\quad+\frac{\beta_t^{(k)}}{2}\left\|(\mathbf{1}-\mathbf{M}_t)\odot\mathbf{T}_t^{1/2}\odot(\mathbf{X}-\mathbf{H}^{\mathrm{old}}_t)\right\|_F^2,
\label{eq:Qk}
\end{align}
where
\begin{equation}
\alpha_t=\frac{\lambda_{\mathrm{obs}}}{\sigma_w^2\sqrt{\|\mathbf{M}_t\|_0}},
\qquad
\beta_t^{(k)}=\lambda_{\mathrm{tmp}}^{(k)}g_t .
\label{eq:alphabeta}
\end{equation}
The first term keeps the estimate close to the denoiser output, the second enforces current-slot observed ports, and the third reuses historical CSI only on unobserved ports.

\begin{proposition}[Closed-form guided fusion at each reverse step]\label{prop:closed_form}
The objective in \eqref{eq:Qk} is strictly convex and admits the unique minimizer
\begin{equation}
\mathbf{X}_{t}^{\dagger(k)}
=
\frac{
\widehat{\mathbf{H}}_{t,0}^{(k)}
\,+\,\alpha_t\mathbf{M}_t\odot\mathbf{Y}_t
\,+\,\beta_t^{(k)}(\mathbf{1}-\mathbf{M}_t)\odot\mathbf{T}_t\odot\mathbf{H}^{\mathrm{old}}_t
}{
\mathbf{1}
\,+\,\alpha_t\mathbf{M}_t
\,+\,\beta_t^{(k)}(\mathbf{1}-\mathbf{M}_t)\odot\mathbf{T}_t
},
\label{eq:closed_form}
\end{equation}
where the division is elementwise. Consequently, for each entry $(i,j)$,
\begin{align}
M_t(i,j)=1&\Rightarrow
X_{t,i,j}^{\dagger(k)}
=
\frac{\widehat{H}_{t,0,i,j}^{(k)}+\alpha_tY_{t,i,j}}{1+\alpha_t},
\label{eq:measured_entry}
\\
M_t(i,j)=0&\Rightarrow
X_{t,i,j}^{\dagger(k)}
=
\frac{\widehat{H}_{t,0,i,j}^{(k)}+\beta_t^{(k)}T_{t,i,j}H^{\mathrm{old}}_{t,i,j}}{1+\beta_t^{(k)}T_{t,i,j}}.
\label{eq:unmeasured_entry}
\end{align}
\end{proposition}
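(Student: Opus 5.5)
The plan is to exploit that $\mathcal{Q}^{(k)}$ in \eqref{eq:Qk} is separable across the entries of $\mathbf{X}$. Every term is a squared Frobenius norm of an elementwise (Hadamard) expression, so each splits into a sum of scalar squares. Hence $\mathcal{Q}^{(k)}(\mathbf{X})=\sum_{c,i,j} q_{c,i,j}(X_{c,i,j})$, where
\[
q(x)=\tfrac12 (x-\widehat H)^2+\tfrac{\alpha_t}{2}M(x-Y)^2+\tfrac{\beta_t^{(k)}}{2}(1-M)T(x-H^{\mathrm{old}})^2 .
\]
Here $M\in\{0,1\}$ and $M^2=M$, $(1-M)^2=1-M$, and $(T^{1/2})^2=T$. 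The mask is broadcast across the two real--imaginary channels, which only affects indexing.

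Strict convexity then follows entrywise. We need $\alpha_t\ge0$, which holds because $\lambda_{\mathrm{obs}}\ge0$ and $\sigma_w^2>0$. We need $\beta_t^{(k)}\ge0$, which holds because $\lambda_{\mathrm{tmp}}^{(k)}\ge0$ and $g_t\ge g_{\min}\ge0$. We also need $\mathbf{T}_t\ge T_{\min}\ge0$, which is immediate from \eqref{eq:Tt}. Under these conditions each $q$ is a quadratic with second derivative
\[
1+\alpha_t M+\beta_t^{(k)}(1-M)T\ge1>0 .
\]
The Hessian of $\mathcal{Q}^{(k)}$ is therefore diagonal with entries at least $1$, so it dominates the identity. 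This gives strict (indeed $1$-strong) convexity, and also coercivity, so a unique minimizer exists.

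Next, set $q'(x)=0$:
\[
(x-\widehat H)+\alpha_t M(x-Y)+\beta_t^{(k)}(1-M)T(x-H^{\mathrm{old}})=0 .
\]
Solving this linear equation, and using $MY$ in place of $M\cdot Y$ to match the stated form, yields \eqref{eq:closed_form} with elementwise division. The denominator is at least $1$, so the division is well defined.

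Finally, specialize to the two cases. When $M_t(i,j)=1$, the temporal terms vanish, which gives \eqref{eq:measured_entry}. When $M_t(i,j)=0$, the observation terms vanish, which gives \eqref{eq:unmeasured_entry}.

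I expect no genuine obstacle. The only care point is confirming the nonnegativity of the weights, which depends on the implicit parameter conventions ($g_{\min},T_{\min},\lambda_{\mathrm{tmp}}\ge0$). These should be stated as assumptions so that strict convexity does not fail through a negative coefficient.
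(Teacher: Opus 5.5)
Your proposal is correct and takes essentially the same route as the paper: the Frobenius norms split entrywise, you solve the first-order condition for each entry, and you substitute $M_t(i,j)\in\{0,1\}$. Your strict-convexity argument is slightly more careful than the paper's. You show the Hessian is diagonal with entries at least $1$, given nonnegative $\alpha_t$, $\beta_t^{(k)}$ and $\mathbf{T}_t$, whereas the paper calls every term strictly convex, although only the proximity term $\tfrac12\|\mathbf{X}-\widehat{\mathbf{H}}_{t,0}^{(k)}\|_F^2$ is strictly convex on its own.
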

\noindent\textit{Proof:}
The objective \eqref{eq:Qk} is a sum of strictly convex quadratic terms, so it has a unique minimizer. Since the Frobenius norm separates entrywise, each entry can be optimized independently. Setting the derivative of \eqref{eq:Qk} with respect to each entry to zero gives
\begin{align}
\mathbf{0}
&=
\mathbf{X}-\widehat{\mathbf{H}}_{t,0}^{(k)}
\,+\,\alpha_t\mathbf{M}_t\odot(\mathbf{X}-\mathbf{Y}_t) \notag\\
&\quad+\beta_t^{(k)}(\mathbf{1}-\mathbf{M}_t)\odot\mathbf{T}_t\odot(\mathbf{X}-\mathbf{H}^{\mathrm{old}}_t),
\end{align}
which can be rearranged entrywise to obtain \eqref{eq:closed_form}. The measured-port and unmeasured-port forms in \eqref{eq:measured_entry}--\eqref{eq:unmeasured_entry} follow directly by substituting $M_t(i,j)=1$ and $M_t(i,j)=0$, respectively. $\hfill\blacksquare$

Proposition~\ref{prop:closed_form} gives a direct interpretation of the proposed guidance structure. 
The estimates at observed ports are obtained by fusing the denoiser prediction with the sparse current observations.
Unobserved ports are updated by a reliability-weighted interpolation between the denoiser prediction and the historical CSI. The sequential guidance steps in Section~\ref{sec:method} provide a low-cost approximation to this reliability-calibrated fusion rule within the reverse sampler.

\subsection{Measured-Port Stability and Stale-CSI Bias Control}
The proposed sampler is not a convex optimization algorithm, and we therefore do not claim global convergence. Instead, we analyze two monotonicity properties that are directly tied to the FAS acquisition mechanism: contraction of the residual on currently observed ports and suppression of bias induced by stale historical CSI.

\begin{proposition}[Observation-residual contraction]\label{prop:obs_contract}
Let $\mathbf{R}=\mathbf{M}_t\odot(\mathbf{X}-\mathbf{Y}_t)$ be the residual on the observed ports. Consider the observation-guidance update $\mathbf{X}^+=\mathbf{X}-\mu\mathbf{M}_t\odot(\mathbf{X}-\mathbf{Y}_t)$ with $0<\mu<2$. Then
\begin{equation}
\left\|\mathbf{M}_t\odot(\mathbf{X}^+-\mathbf{Y}_t)\right\|_F
=|1-\mu|\left\|\mathbf{R}\right\|_F
<\left\|\mathbf{R}\right\|_F.
\label{eq:obs_contract}
\end{equation}
\end{proposition}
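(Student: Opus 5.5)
The argument is a direct algebraic computation that uses only that the mask is binary. Since $\mathbf{M}_t\in\{0,1\}^{1\times N_h\times N_w}$, it is idempotent under elementwise multiplication: $\mathbf{M}_t\odot\mathbf{M}_t=\mathbf{M}_t$. Here $\mathbf{M}_t$ is broadcast across the two real--imaginary channels, and the identity holds entrywise on $\mathbf{X}$, $\mathbf{Y}_t$, and all other tensors in $\mathbb{R}^{2\times N_h\times N_w}$.

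First, I will subtract $\mathbf{Y}_t$ from both sides of the update, which gives
\begin{equation*}
\mathbf{X}^+-\mathbf{Y}_t=(\mathbf{X}-\mathbf{Y}_t)-\mu\,\mathbf{M}_t\odot(\mathbf{X}-\mathbf{Y}_t).
\end{equation*}
Next, I will multiply elementwise by $\mathbf{M}_t$ and apply idempotence to the second term:
\begin{equation*}
\mathbf{M}_t\odot(\mathbf{X}^+-\mathbf{Y}_t)=\mathbf{R}-\mu\,\mathbf{R}=(1-\mu)\mathbf{R}.
\end{equation*}
Taking Frobenius norms and using absolute homogeneity of the norm then yields the equality in \eqref{eq:obs_contract}.

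For the strict inequality, the hypothesis $0<\mu<2$ is equivalent to $-1<1-\mu<1$, so $|1-\mu|<1$. This gives $|1-\mu|\,\|\mathbf{R}\|_F<\|\mathbf{R}\|_F$, provided $\mathbf{R}\neq\mathbf{0}$.

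There is no real obstacle in this argument. The one point needing care is the degenerate case $\mathbf{R}=\mathbf{0}$, in which both sides vanish and the strict inequality fails. I would therefore state the strict inequality under the assumption $\mathbf{R}\neq\mathbf{0}$, and note that otherwise the residual is already zero and stays zero, with equality $0=0$. I will also remark that the unobserved entries, where $M_t(i,j)=0$, are left unchanged by the update. Finally, I will note that $\mu=1$ zeroes the measured-port residual in one step, which connects to the fusion in \eqref{eq:obsfusion} with $a_t=\mu$ and to the measured-entry form \eqref{eq:measured_entry} of Proposition~\ref{prop:closed_form}.
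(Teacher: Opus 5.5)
Your proof is correct and matches the paper's argument: idempotence $\mathbf{M}_t\odot\mathbf{M}_t=\mathbf{M}_t$ gives $\mathbf{M}_t\odot(\mathbf{X}^+-\mathbf{Y}_t)=(1-\mu)\mathbf{R}$, and homogeneity of the Frobenius norm together with $|1-\mu|<1$ finishes it. Your caveat is also valid: the strict inequality needs $\mathbf{R}\neq\mathbf{0}$, an assumption the paper's statement leaves implicit.
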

\noindent\textit{Proof:} Since $\mathbf{M}_t\odot\mathbf{M}_t=\mathbf{M}_t$, applying the mask to the update gives $\mathbf{M}_t\odot(\mathbf{X}^+-\mathbf{Y}_t)=(1-\mu)\mathbf{M}_t\odot(\mathbf{X}-\mathbf{Y}_t)$. Taking the Frobenius norm yields \eqref{eq:obs_contract}. $\hfill\blacksquare$

Proposition~\ref{prop:obs_contract} explains why \eqref{eq:obsgrad} and \eqref{eq:obsfusion} reduce drift on measured entries. The direct fusion step strengthens this effect by explicitly replacing or averaging the measured components with the sparse current observations. In the FAS context, this ensures that the few ports actually sounded in the current slot remain fixed reference points throughout reconstruction.

\begin{proposition}[Bias control under stale CSI]\label{prop:bias_control}
Let $\bm{\Delta}_t=\mathbf{H}^{\mathrm{old}}_t-\mathbf{H}_t$ denote the historical CSI mismatch. For the temporal correction in \eqref{eq:starclean}, the bias magnitude induced by stale CSI on unobserved entries satisfies
\begin{align}
&\left\|\lambda_{\mathrm{tmp}}^{(k)}g_t(\mathbf{1}-\mathbf{M}_t)
\odot\mathbf{T}_t\odot\bm{\Delta}_t\right\|_F^2 \notag\\
&\quad\le
(\lambda_{\mathrm{tmp}}^{(k)})^2 g_t^2 \|\mathbf{T}_t\|_\infty^2
\left\|(\mathbf{1}-\mathbf{M}_t)\odot\bm{\Delta}_t\right\|_F^2.
\label{eq:bias_bound}
\end{align}
Moreover, since $g_t=g_{\min}+(g_{\max}-g_{\min})\exp(-\eta_t^{\mathrm{m}}/\tau_g)$, the upper bound decreases monotonically with the observed historical mismatch $\eta_t^{\mathrm{m}}$ whenever $g_{\min}$ is small.
\end{proposition}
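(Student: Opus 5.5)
The plan is to prove the inequality entrywise and then handle the monotonicity claim separately. Since $\lambda_{\mathrm{tmp}}^{(k)}$ and $g_t$ are nonnegative scalars, they factor out of the Frobenius norm as $(\lambda_{\mathrm{tmp}}^{(k)})^2g_t^2$. The remaining quantity is
\begin{equation*}
\|(\mathbf{1}-\mathbf{M}_t)\odot\mathbf{T}_t\odot\bm{\Delta}_t\|_F^2=\sum_{i,j}T_{t,i,j}^2\,(1-M_t(i,j))^2\,\Delta_{t,i,j}^2 .
\end{equation*}
Because the mask is binary, $(1-M_t(i,j))^2=1-M_t(i,j)$. By construction in \eqref{eq:Tt}, every entry satisfies $0\le T_{t,i,j}\le \|\mathbf{T}_t\|_\infty$; indeed $T_{\min}\le T_{t,i,j}\le 1$, since $\mathbf{D}_t\ge 0$. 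Bounding each $T_{t,i,j}^2$ by $\|\mathbf{T}_t\|_\infty^2$ and summing gives \eqref{eq:bias_bound}. If $\mathbf{T}_t$ is shared across the two real--imaginary channels, the same entrywise argument applies channel by channel. This part is a routine Hölder-type bound: the Frobenius norm of a Hadamard product is at most the sup-norm of one factor times the Frobenius norm of the other.

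For the monotonicity statement, I would note that the right-hand side depends on $\eta_t^{\mathrm{m}}$ only through $g_t^2$. Here I regard $\bm{\Delta}_t$ and $\mathbf{M}_t$ as given, and $\|\mathbf{T}_t\|_\infty\le 1$ can in any case be replaced by its uniform bound $1$. Differentiating \eqref{eq:gt} gives
\begin{equation*}
\frac{\partial g_t}{\partial\eta_t^{\mathrm{m}}}=-\frac{g_{\max}-g_{\min}}{\tau_g}\exp\!\left(-\frac{\eta_t^{\mathrm{m}}}{\tau_g}\right)\le 0,
\end{equation*}
which holds whenever $g_{\max}\ge g_{\min}$ and $\tau_g>0$. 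Since $g_t\ge g_{\min}\ge0$, the map $g\mapsto g^2$ is increasing on the relevant range, so $g_t^2$ is nonincreasing in $\eta_t^{\mathrm{m}}$ and strictly decreasing when $g_{\max}>g_{\min}$. Small $g_{\min}$ is needed for the bound to actually become small at large mismatch, because $g_t\to g_{\min}$ as $\eta_t^{\mathrm{m}}\to\infty$ and the bound therefore tends to $(\lambda_{\mathrm{tmp}}^{(k)})^2g_{\min}^2\|\mathbf{T}_t\|_\infty^2\|(\mathbf{1}-\mathbf{M}_t)\odot\bm{\Delta}_t\|_F^2$. I would state this interpretation explicitly.

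The main obstacle is not the inequality itself but making the monotonicity claim precise. The trust map $\mathbf{T}_t$ also depends on the observed disagreement through $\mathbf{D}_t$, so "monotone in $\eta_t^{\mathrm{m}}$" is only meaningful if $\mathbf{T}_t$ is held fixed or bounded uniformly. I would handle this in two ways. The first is to use $\|\mathbf{T}_t\|_\infty\le1$, so that the effective bound $(\lambda_{\mathrm{tmp}}^{(k)})^2g_t^2\|(\mathbf{1}-\mathbf{M}_t)\odot\bm{\Delta}_t\|_F^2$ depends on the mismatch only via $g_t$. The second is to remark that larger local discrepancy shrinks $\mathbf{T}_t$ entrywise, by the same exponential argument applied to \eqref{eq:Tt}, which only reinforces the decrease.
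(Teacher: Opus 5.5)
Your proposal is correct and follows essentially the same route as the paper, which likewise obtains the inequality from the H\"older-type bound $\|\mathbf{A}\odot\mathbf{B}\|_F\le\|\mathbf{A}\|_\infty\|\mathbf{B}\|_F$ with $0\le\mathbf{T}_t\le\|\mathbf{T}_t\|_\infty$, and the monotonicity from the sign of $\partial g_t/\partial\eta_t^{\mathrm{m}}$. Your extra care makes explicit what the paper leaves implicit: you use $g_t\ge g_{\min}\ge 0$ so that $g_t^2$ inherits the monotonicity, and you hold $\mathbf{T}_t$ and $\bm{\Delta}_t$ fixed. Your side remark that a shrinking $\mathbf{T}_t$ ``only reinforces'' the decrease is heuristic, however, because $\mathbf{D}_t$ is not monotonically tied to $\eta_t^{\mathrm{m}}$; redistributing the observed disagreement can raise $\eta_t^{\mathrm{m}}$ while lowering some entries of $\mathbf{D}_t$.
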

\noindent\textit{Proof:} The result follows from the submultiplicative property of the elementwise weighted Frobenius norm and $0\le \mathbf{T}_t\le \|\mathbf{T}_t\|_\infty$. The monotonicity follows from $\partial g_t/\partial \eta_t^{\mathrm{m}}=-(g_{\max}-g_{\min})\exp(-\eta_t^{\mathrm{m}}/\tau_g)/\tau_g<0$. $\hfill\blacksquare$

Proposition~\ref{prop:bias_control} formalizes the intended role of reliability-aware temporal reuse. If the historical CSI is consistent with the sparse current observations, $g_t$ is large and the posterior sampler can exploit temporal side information. If the historical CSI is stale, the induced bias is suppressed by the sample-wise gate and the spatial trust map. In practical terms, the mechanism reduces the risk that outdated aperture peaks distort the reconstructed port ranking. This explains why a fixed temporal weight is less robust than the proposed adaptive weighting.

\subsection{Computational Implication}
From a complexity perspective, let $S$ denote the number of retained DDIM steps, $L$ the number of posterior samples, and $\mathcal{C}_{\mathrm{net}}$ the cost of one denoiser evaluation. The dominant online complexity is $\mathcal{O}(LS\mathcal{C}_{\mathrm{net}})$, while observation fusion, mismatch evaluation, trust-map construction, and temporal correction require only lower-order elementwise operations of $\mathcal{O}(N_hN_w)$. Hence, with the reported setting $L=1$ and $S=25$, DCP is dominated by 25 denoiser evaluations, and its reliability-control modules do not change the asymptotic order. Memory is likewise dominated by the U-Net parameters and intermediate feature maps. Because DCP, DUPP, and DUNP use the same number of retained denoising steps and closely related backbones, their deployment costs are expected to be similar, with only a small additional cost for reliability control. Section~\ref{sec:results} verifies this expectation using measured FLOP and peak-memory statistics.

\section{Dataset Construction and Experimental Setup}\label{sec:setup}
\subsection{Data Construction and Aging Protocol}
All methods use a common synthetic $16\times16$ ($256$-port) aperture pipeline. Spatial fields follow \eqref{eq:spatialcorr}, temporal evolution follows \eqref{eq:aging}, and historical CSI is constructed from $\mathbf{H}^{\mathrm{c}}_{t-d}$ through \eqref{eq:hold}, where $d$ sets the source slot and $\alpha_t$, $\mathbf{S}^{\mathrm{c}}_t$, and $\mathbf{N}^{\mathrm{old,c}}_t$ model additional degradation. Sparse observations follow \eqref{eq:obsmodel}. This controlled protocol isolates sparse probing, temporal correlation, and historical-CSI mismatch rather than replacing ray-tracing or measured-channel campaigns.

Unless otherwise specified, the aperture is $16\times16$ with $256$ ports, the SNR is $15$ dB, and diffusion sampling uses $25$ DDIM steps. Budget, temporal-correlation, and observation-pattern settings are summarized in Table~\ref{tab:sim}; $\rho_t$ is reported directly to avoid tying the results to a specific carrier, velocity, or slot duration.

\subsection{Compared Methods and Fairness Rules}
The compared methods are DCP, DUPP, DUNP, orthogonal
matching pursuit (OMP), sparse Bayesian learning (SBL), linear minimum mean-square error (LMMSE), and spatiotemporal Gaussian-process estimator (GP-ST). DCP is the full proposed design. DUPP keeps the online posterior-prior correction but removes conditional diffusion training; it therefore tests whether historical CSI can be added only at inference time. DUNP removes historical CSI reuse and keeps an unconditional diffusion prior; it tests whether a generative aperture prior alone is sufficient. OMP and SBL are model-driven sparse-recovery baselines. LMMSE is the standard covariance-based linear estimator. GP-ST is the most relevant analytical benchmark since it fuses sparse current observations and temporal side information through a Gaussian process prior.

All methods share the same data partitions, masks, and training-derived normalization, preventing test leakage and sampling-set advantages. In the observation-pattern tests, all methods use identical observed ports within each pattern; latency is measured after warm-up in the same hardware environment and reported per sample. These controls ensure that the reported differences primarily reflect the reconstruction strategy and its use of current or historical CSI rather than differences in data partitioning, probing support, or timing conditions.

\begin{table}[t]
\caption{Core Experimental Configuration}
\label{tab:sim}
\centering
\small
\begin{tabular}{p{2.75cm}p{3.75cm}}
\toprule
Item & Setting \\
\midrule
Aperture grid & $16\times16$ \\
Total ports & $256$ \\
Training epochs & $250$ \\
Default SNR & $15$ dB \\
Main DDIM steps & $25$ \\
Budget sweep & $20,40,60,80,100,120,140$ ports \\
Temporal-correlation bins & $[0.75,0.80]$ to $[0.95,1.00]$ \\
Observation patterns & Uniform, Random, Greedy \\
Pattern-robustness budget & $20$ ports \\
Latency metric & per-sample online inference latency \\
\bottomrule
\end{tabular}
\end{table}

\begin{table}[t]
\caption{Diffusion Baselines Used in the Comparison}
\label{tab:ablation}
\centering
\small
\begin{tabular}{p{1.3cm}p{2.0cm}p{1.75cm}p{2.1cm}}
\toprule
Method & Offline input & Online temporal path & Purpose \\
\midrule
DCP & $\mathbf{H}^{\mathrm{old}}_t$, $\mathbf{Y}_t$, $\mathbf{M}_t$ & Soft historical regularization retained & Full conditional design \\
DUPP & Aperture-field samples only & Use $\mathbf{H}^{\mathrm{old}}_t$ only in posterior correction & Isolate online historical CSI injection \\
DUNP & Aperture-field samples only & No historical prior & Diffusion baseline without historical CSI \\
\bottomrule
\end{tabular}
\end{table}

\subsection{Evaluation Metrics}
The primary plotted reconstruction metric is normalized mean-square error (NMSE),
\begin{equation}
\mathrm{NMSE}_t = \frac{\|\widehat{\mathbf{H}}_t-\mathbf{H}_t\|_F^2}{\|\mathbf{H}_t\|_F^2},
\label{eq:nmse}
\end{equation}
which is also plotted in decibel form as $10\log_{10}(\mathrm{NMSE}_t)$ for clearer visualization of performance trends. The online-cost metric is the average per-sample inference time under a fixed hardware platform after warm-up. For communication-level evaluation, we additionally report the average achievable rate of the reconstruction-based selected port according to \eqref{eq:rate_def}, using a transmission SNR of $5$ dB. NMSE measures full-field fidelity, whereas achievable rate directly reflects whether the reconstruction preserves decision-relevant high-gain ports. Together with latency, these metrics separate reconstruction accuracy, communication-level port-selection utility, and online deployment cost.

\subsection{Parameter Interpretation}
The inference parameters have direct operational roles and can be tuned hierarchically rather than through an exhaustive joint search. The observation-guidance weight $\lambda_{\mathrm{obs}}$ controls measured-port anchoring and should increase with current-measurement reliability; smaller values are preferable when the sparse observations are noisy to avoid overfitting unreliable pilots. The range $[g_{\min},g_{\max}]$ bounds historical-CSI reuse, while $\tau_g$ controls its sensitivity to the normalized mismatch in \eqref{eq:etamismatch}; $g_{\min}$ should remain small enough to suppress severely stale CSI and $g_{\max}$ should permit strong reuse when old and current measurements agree. Similarly, $T_{\min}$ and $\tau_T$ set the minimum local trust and its sensitivity to the discrepancy map in \eqref{eq:Dt}. The threshold $r_0$ is preferably selected in the middle-to-late reverse trajectory so that current-slot structure is established before temporal regularization becomes influential, while $[\xi_{\min},\xi_{\max}]$ should provide a moderate warm start that exploits useful history without biasing the initialization excessively toward stale CSI. For a new deployment, we recommend tuning $\lambda_{\mathrm{obs}}$ over representative SNRs, then the gate/trust parameters over representative aging levels, and finally $(r_0,\xi_{\min},\xi_{\max})$ on a held-out validation set; the selected values should then be kept fixed during the corresponding test sweeps.

\section{Numerical Results}\label{sec:results}
The experiments evaluate the proposed framework in terms of reconstruction accuracy, selected-port achievable rate, robustness, and online complexity.
We first examine the benefit of conditioning the diffusion prior on sparse current observations and stale CSI for instantaneous aperture-field reconstruction. 
We then quantify how the reconstruction performance varies with the probing budget, SNRs, temporal correlation, historical delay, and
historical-prior noise, and finally assess the associated latency overhead.
The discussion focuses on the main performance trends, their underlying physical interpretations, and the resulting design implications.

\subsection{Main Reconstruction Results}
The first group of results varies the sparse current observations, the quality of that evidence, and the reliability of the historical CSI. These tests correspond to the main operating factors of an aging FAS receiver.

\subsubsection{Effect of Probing Budget}
Figs.~\ref{fig:budget_results} and~\ref{fig:achievable_rate_results} jointly characterize the probing-budget tradeoff. As the number of observed ports increases from $20$ to $140$, all methods improve, while DCP maintains the lowest NMSE and achieves the highest average achievable rate, with the clearest advantage in the sparse-probing regime. This gain follows from using historical CSI during both training and sampling: DUNP lacks current/temporal conditioning, DUPP injects historical CSI only online, and GP-ST is constrained by a fixed correlation model. The simultaneous NMSE and rate gains indicate that DCP improves not only global aperture reconstruction but also the decision-relevant structure used for port activation. At larger budgets, SBL approaches DCP in achievable rate because rate mainly depends on preserving a near-optimal high-gain port rather than minimizing full-field NMSE, consistent with Proposition~\ref{prop:rate_regret}.

\begin{figure}[!t]
    \centering
    \includegraphics[width=\columnwidth]{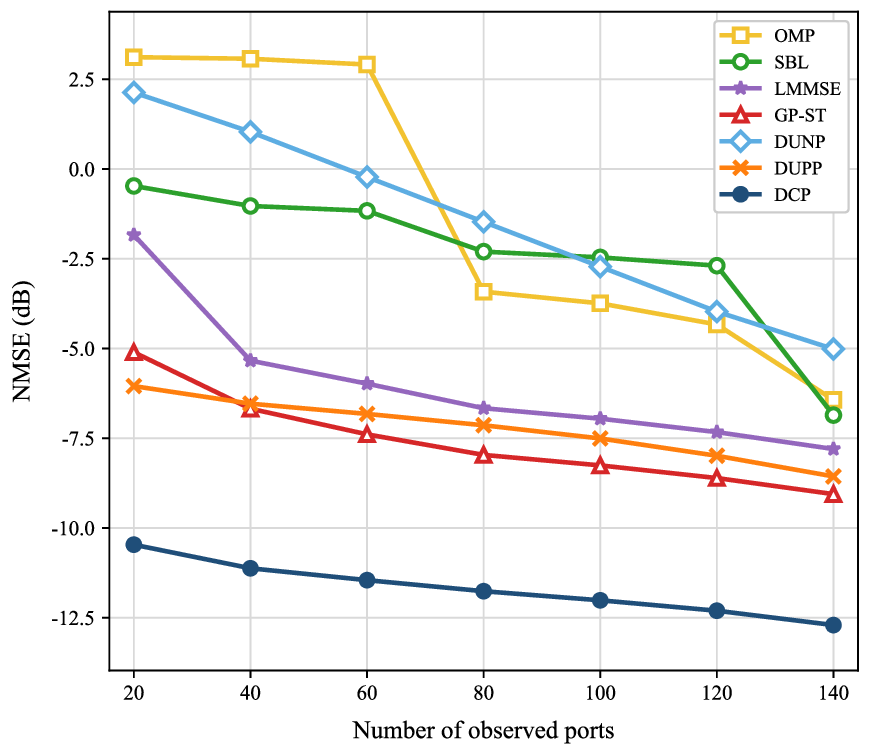}
    \caption{NMSE versus the number of observed ports, which varies from 20 to 140.}
    \label{fig:budget_results}
\end{figure}

\begin{figure}[!t]
    \centering
    \includegraphics[width=\columnwidth]{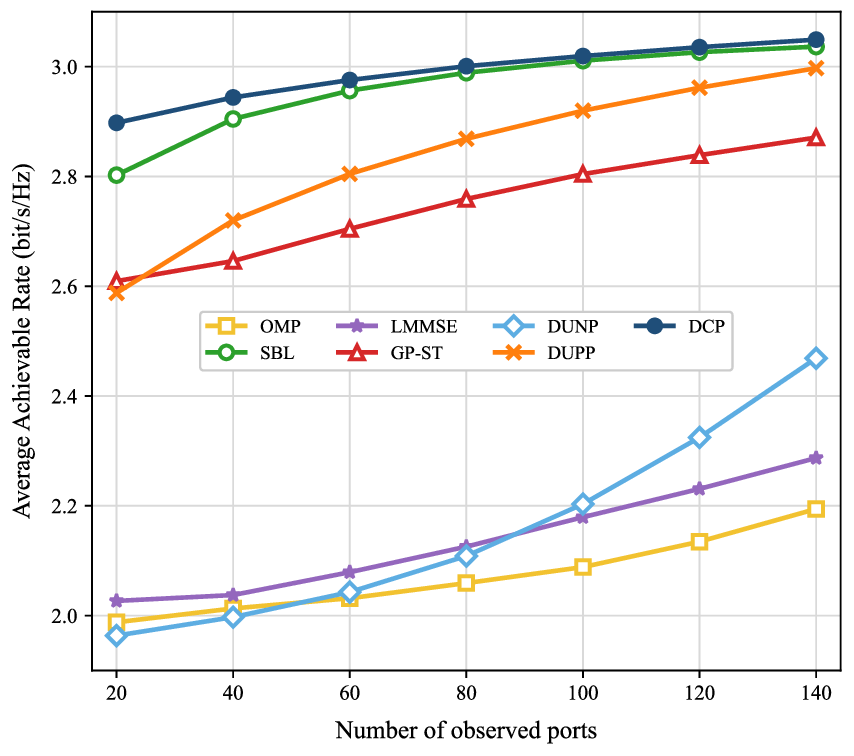}
    \caption{Average achievable rate versus the number of observed ports under random probing, with an observation SNR of $15$ dB and a transmission SNR of $5$ dB.}
    \label{fig:achievable_rate_results}
\end{figure}

\subsubsection{Effect of SNR}
Fig.~\ref{fig:snr_results} shows the SNR sweep from $-10$ to $30$ dB. The NMSE of all methods decreases as the sparse current observations become cleaner, and DCP remains best throughout the range. At low SNR, noisy pilots limit reconstruction, whereas at high SNR the dominant errors arise from unobserved ports and model mismatch; consequently, LMMSE and GP-ST flatten while DCP continues improving by propagating reliable local evidence through its learned aperture prior.

\begin{figure}[!t]
    \centering
    \includegraphics[width=\columnwidth]{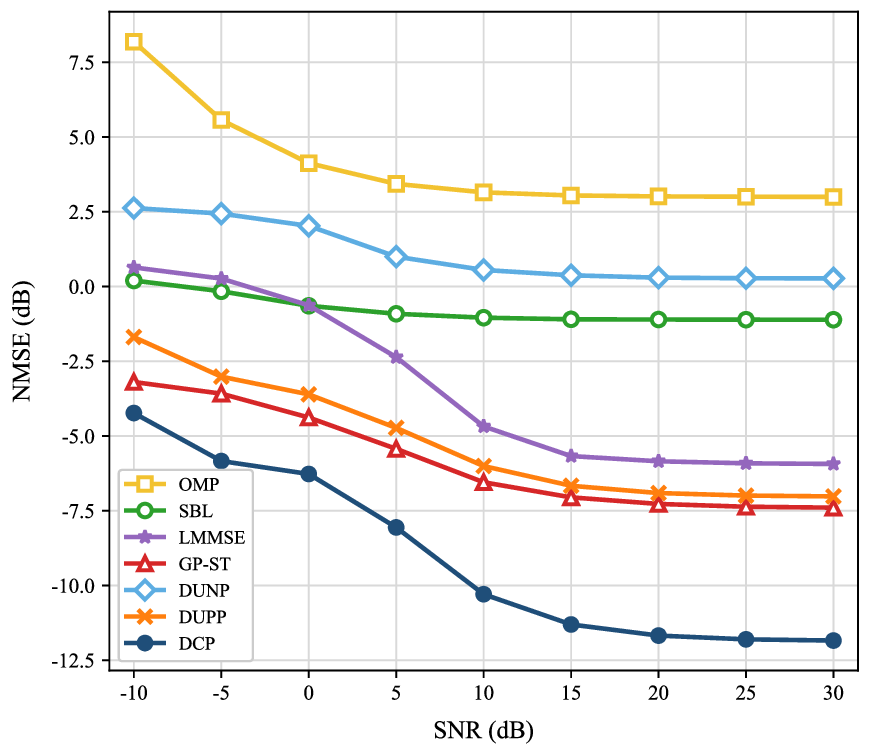}
    \caption{NMSE versus SNR over the range from $-10$ to $30$ dB at a probing ratio of $0.2$.}
    \label{fig:snr_results}
\end{figure}

\subsubsection{Effect of Temporal Correlation}
Fig.~\ref{fig:rho_results} shows that methods exploiting historical CSI improve as temporal correlation increases, while DUNP is comparatively insensitive because it ignores temporal information and GP-ST remains constrained by its analytical model. At low correlation, stale CSI is useful only where it agrees with the current observations and can otherwise introduce bias; at high correlation, it becomes a stronger predictor for unobserved ports. DCP handles both regimes by using the mismatch gate and spatial trust map to suppress unreliable historical CSI and strengthen temporal reuse when the side information is more reliable.

\begin{figure}[!t]
    \centering
    \includegraphics[width=\columnwidth]{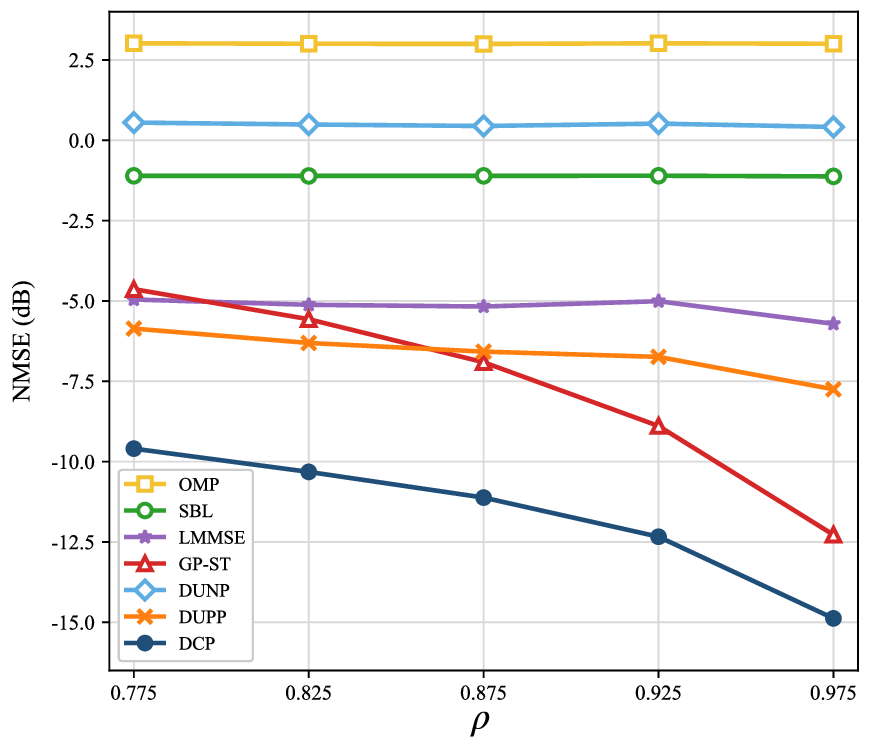}
    \caption{NMSE versus temporal correlation over five intervals spanning $\rho_t\in[0.75,1.00]$ at an SNR of $15$ dB.}
    \label{fig:rho_results}
\end{figure}

\subsection{Mechanism Validation}
The next experiments check whether the gain comes from the reliability-control mechanism itself and whether the method remains stable when the historical CSI is delayed or noisy and when the current observations follow different probing patterns.

\subsubsection{Effect of Historical Delay}
Fig.~\ref{fig:hist_delay_results} varies the historical delay $d$ using $\mathbf{H}^{\mathrm{old}}_t=\mathbf{H}_{t-d}$ without additional Doppler-dependent scaling, thereby isolating the effect of CSI age from the extra fidelity degradation modeled by $\alpha_t$ in \eqref{eq:hold}. All variants degrade as $d$ increases, and the hard-prior variant deteriorates most severely because outdated CSI is imposed too strongly. DCP-full remains best across the delay range; its advantage over the no-prior variant gradually narrows as the sampler assigns less trust to increasingly stale CSI. This behavior is consistent with Proposition~\ref{prop:aging_rate}: temporal decorrelation should progressively reduce, rather than strengthen, the role of historical CSI.

\begin{figure}[!t]
    \centering
    \includegraphics[width=\columnwidth]{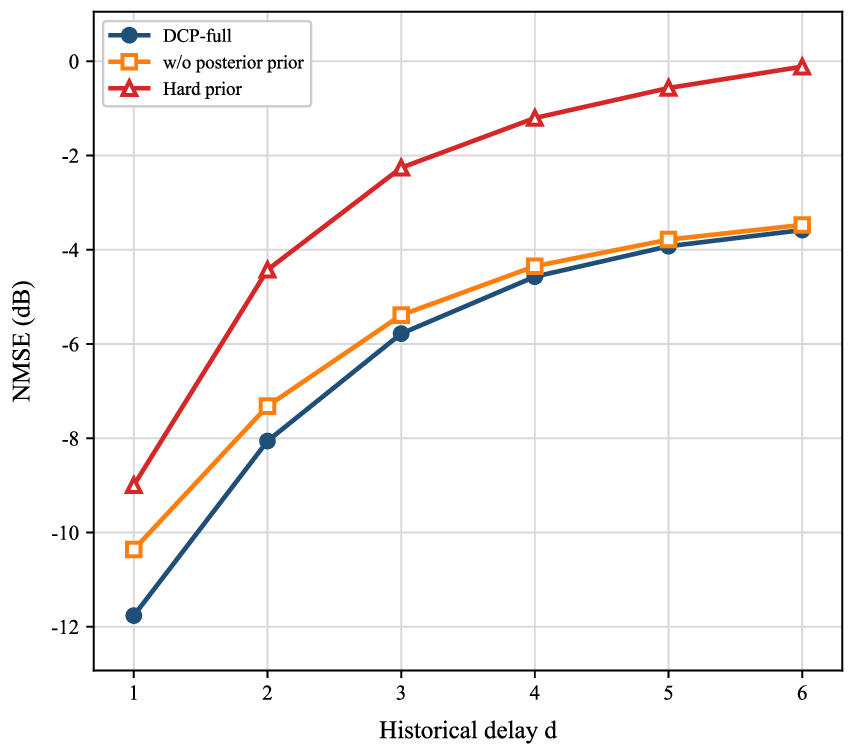}
    \caption{NMSE of the DCP variants versus historical delay
from $1$ to $6$ slots with 100 observed ports at a pilot SNR
of 15 dB.}
    \label{fig:hist_delay_results}
\end{figure}

\subsubsection{Effect of Historical-Prior Noise}
Fig.~\ref{fig:hist_noise_results} evaluates robustness to corrupted historical CSI by applying $\mathbf{H}^{\mathrm{old}}\leftarrow\mathbf{H}^{\mathrm{old}}+\sigma_{\mathrm{old}}\mathbf{Z}$. NMSE increases with $\sigma_{\mathrm{old}}$ for all variants, and the hard-prior design is the most sensitive, confirming that noisy historical CSI should not be treated as ground truth. DCP-full degrades more slowly, while removing posterior-prior injection, warm initialization, or reliability control becomes increasingly harmful under stronger corruption. The widening separation under larger $\sigma_{\mathrm{old}}$ shows that the gate and trust map are most useful when historical CSI still carries exploitable temporal structure but its local reliability is uncertain, which is precisely the mixed-reliability regime targeted by DCP.

\begin{figure}[!t]
    \centering
    \includegraphics[width=\columnwidth]{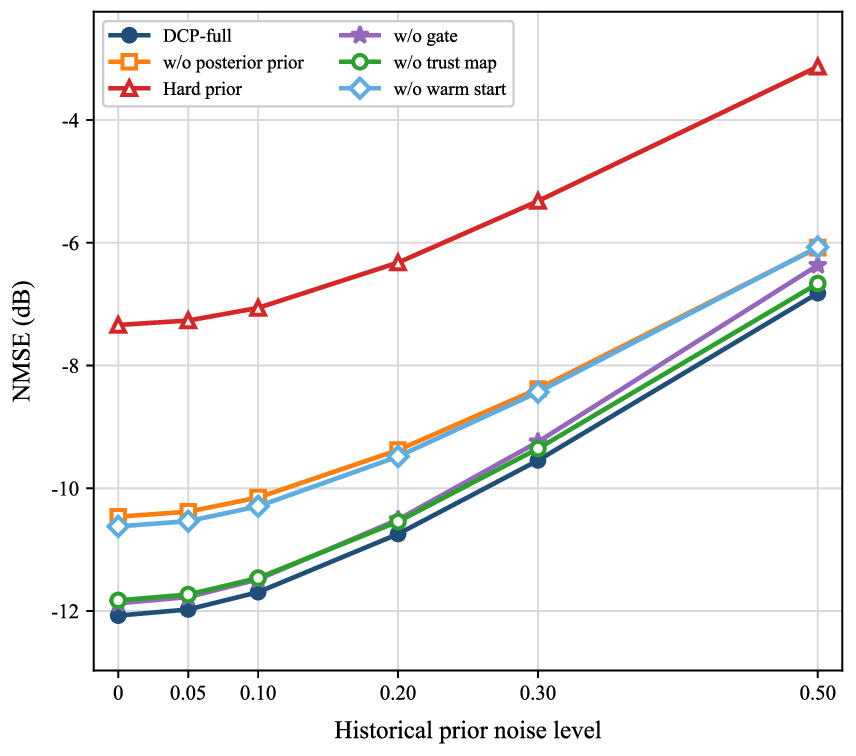}
    \caption{NMSE of DCP variants versus historical-prior noise level $\sigma_{\mathrm{old}}$ as $\sigma_{\mathrm{old}}$ varies from $0$ to $0.50$ with $100$ observed ports at an SNR of $15$ dB.}
    \label{fig:hist_noise_results}
\end{figure}

\subsubsection{Historical-Prior Ablation}
Fig.~\ref{fig:ablation_results} compares DCP with five reduced variants. DCP-full achieves the lowest NMSE at every tested budget, whereas the hard-prior variant performs worst, confirming that the gain comes from controlled historical reuse rather than simply adding stale CSI. The reduced variants separate the roles of posterior-prior injection, warm initialization, sample-wise gating, and spatial trust weighting: the first two improve the use of informative temporal structure, while the gate and trust map limit the influence of unreliable historical guidance. The contrast with the hard-prior design further shows that historical CSI is most useful when its strength and spatial support are adaptively calibrated rather than enforced uniformly.

\begin{figure}[!t]
    \centering
    \includegraphics[width=\columnwidth]{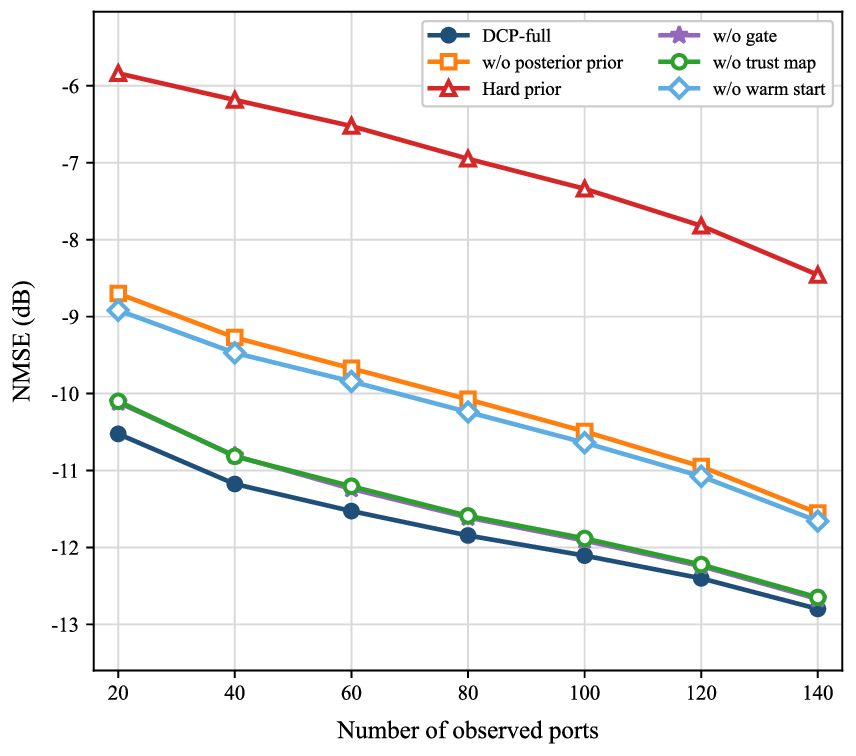}
    \caption{NMSE comparison of DCP and five ablated variants as the number of observed ports varies from $20$ to $140$ at an SNR of $15$ dB.}
    \label{fig:ablation_results}
\end{figure}

\subsubsection{Robustness to Observation Pattern}
Fig.~\ref{fig:pattern_results} compares Uniform, Random, and Greedy observation patterns under the same sparse budget. DCP exhibits only limited NMSE variation across the three patterns, indicating that its gain is not tied to one mask geometry. Because the observation mask is explicitly included in the condition tensor, the denoiser is informed of which ports are actually observed and the same reconstruction rule can adapt to different spatial supports without assuming a fixed probing layout. This robustness is important for practical FAS acquisition, where the available probing pattern may change with pilot budget or port-selection policy.

\begin{figure}[!t]
    \centering
    \includegraphics[width=\columnwidth]{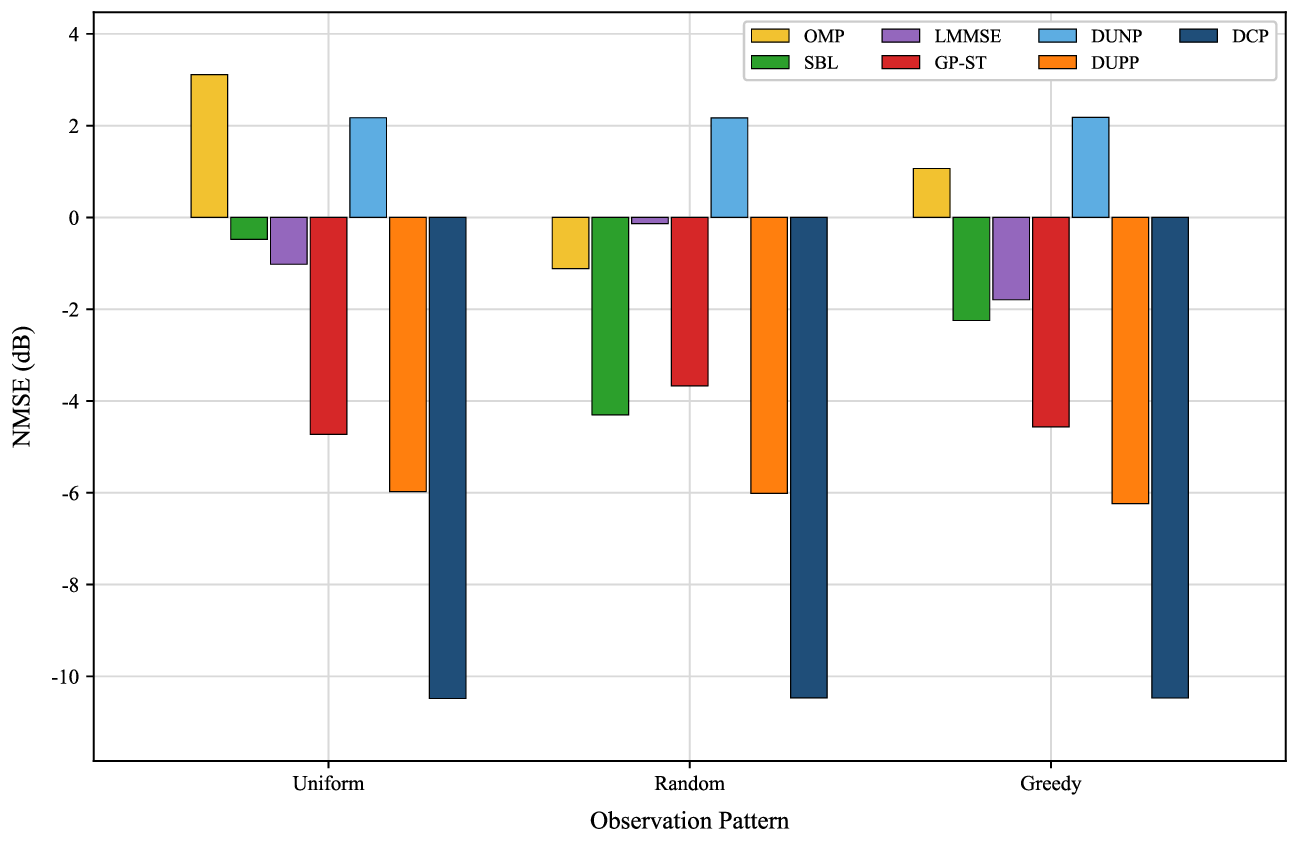}
    \caption{NMSE comparison across Uniform, Random, and Greedy observation patterns with a fixed budget of $20$ observed ports at an SNR of $15$ dB.}
    \label{fig:pattern_results}
\end{figure}

\subsection{Complexity and Online Deployment}
We quantify online cost using approximate FLOPs, trainable parameters, peak GPU memory, and measured inference latency. The FLOP/memory profile uses batch size $1$ and FP32 on an NVIDIA GeForce RTX 4090 at a probing ratio of $0.2$ and an SNR of $15$ dB, with $25$ DDIM steps and $L=1$ for the diffusion methods. Fig.~\ref{fig:method_latency_results} shows that GP-ST and LMMSE have the lowest latency, OMP/SBL require iterative recovery, and DCP has nearly the same few-millisecond runtime as the other diffusion variants. Diffusion FLOPs count the dominant convolutional/linear operations, while the OMP/SBL values are upper-bound estimates because early stopping can reduce the realized computation. Table~\ref{tab:complexity_profile} shows that DCP requires $5.463$ GFLOPs and $19.18$ MiB peak memory, while adding only $0.68\%$ FLOPs, $0.14\%$ parameters, and $0.08\%$ peak memory over DUPP. This measured profile is consistent with the computational analysis in Section~\ref{sec:theory}: repeated denoiser evaluations dominate the diffusion cost, whereas mismatch evaluation, trust-map construction, and temporal correction contribute only lower-order operations. Therefore, reliability calibration introduces negligible additional overhead relative to the diffusion backbone, while LMMSE and GP-ST remain substantially lighter analytical alternatives when computational cost is the primary concern.

\begin{figure}[!t]
    \centering
    \includegraphics[width=\columnwidth]{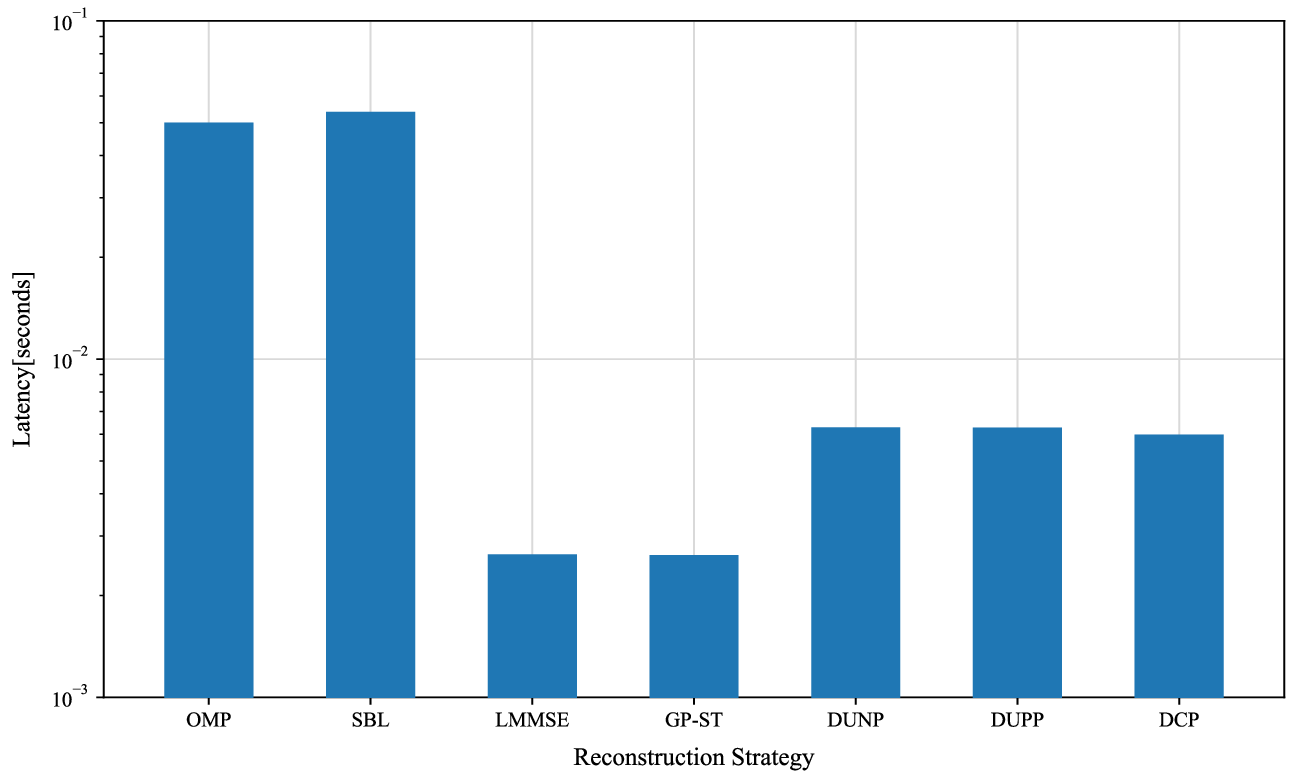}
    \caption{Average per-sample online inference latency across seven reconstruction strategies at a probing ratio of $0.2$ and an SNR of $15$ dB, shown on a logarithmic scale.}
    \label{fig:method_latency_results}
\end{figure}

\begin{table}[!t]
\caption{Computational Profile of the Compared Methods}
\label{tab:complexity_profile}
\centering
\small
\setlength{\tabcolsep}{3.2pt}
\begin{tabular}{lccc}
\toprule
Method & Approx. FLOPs (G) & Params (M) & Peak memory (MiB) \\
\midrule
OMP    & 0.0122 & --    & 14.76 \\
SBL    & 4.356  & --    & 10.87 \\
LMMSE  & $4.79\times10^{-4}$ & -- & 9.29 \\
GP-ST  & $4.79\times10^{-4}$ & -- & 8.79 \\
DUNP   & 5.426  & 2.017 & 19.15 \\
DUPP   & 5.426  & 2.017 & 19.17 \\
DCP    & 5.463  & 2.020 & 19.18 \\
\bottomrule
\end{tabular}
\end{table}

\section{Conclusion}\label{sec:conclusion}

\label{sec:conclusion}
This paper studied sparse acquisition in aging two-dimensional FAS receivers, where fresh-but-sparse observations must be combined with dense-but-stale CSI. DCP integrates conditional diffusion with observation anchoring, adaptive temporal gating, spatial trust weighting, progressive activation, and warm initialization, while the analysis links temporal decorrelation and reconstruction error to port-selection rate distortion and regret. The numerical results show improved reconstruction and selected-port achievable rate under limited probing and imperfect historical CSI, with negligible additional FLOP/memory cost over the diffusion baselines and millisecond-level inference. The complementary NMSE and rate results further indicate that the proposed reconstruction does not only improve full-field fidelity, but also better preserves decision-relevant high-gain ports for current-slot activation. Overall, the results support a practical design principle for aging FAS acquisition: historical CSI should neither be discarded nor enforced as a hard prior, but reused according to its current reliability. This reliability-calibrated temporal reuse reduces sparse-acquisition uncertainty while maintaining a favorable balance among probing overhead, communication utility, and online deployment cost. Future work will consider measured channels, multi-slot memory, adaptive probing, and direct communication-utility optimization.

\balance


\begin{thebibliography}{99}
	
\bibitem{Lai2026JSAC}
X. Lai {\em et al.}, ``Revisiting spatial block-correlation model for fluid antenna systems: From constant to variable correlations,'' \emph{IEEE J. Sel. Areas Commun.}, vol. 44, pp. 1335--1351,
Oct. 2025.

\bibitem{Shojaeifard2022ProcIEEE}
A. Shojaeifard {\em et al.}, ``MIMO evolution beyond 5G through reconfigurable intelligent surfaces and fluid antenna systems,'' \emph{Proc. IEEE}, vol. 110, no. 9, pp. 1244--1265, Sep. 2022.

\bibitem{Wong2021TWC}
K.-K. Wong, A. Shojaeifard, K.-F. Tong, and Y. Zhang, ``Fluid antenna systems,'' \emph{IEEE Trans. Wireless Commun.}, vol. 20, no. 3, pp. 1950--1962, Mar. 2021.

\bibitem{New2024Outage}
W. K. New, K.-K. Wong, H. Xu, K.-F. Tong, and C.-B. Chae, ``Fluid antenna system: New insights on outage probability and diversity gain,'' \emph{IEEE Trans. Wireless Commun.}, vol. 23, no. 1, pp. 128--140, Jan. 2024.


\bibitem{New2025Tutorial}
W. K. New {\em et al.}, ``A tutorial on fluid antenna system for 6G networks: Encompassing communication theory, optimization methods and hardware designs,'' \emph{IEEE Commun. Surveys Tuts.}, vol. 27, no. 4, pp. 2325--2377, Aug. 2025.

\bibitem{Wu2025MWC}
T. Wu {\em et al.}, ``Fluid antenna systems enabling 6G: Principles, applications, and research directions,'' \emph{IEEE Wireless Commun.}, early access, 2025, doi: 10.1109/MWC.2025.3629597.  

\bibitem{Lu2025MCOM}
W.-J. Lu {\em et al.}, ``Fluid antennas: Reshaping intrinsic properties for flexible radiation characteristics in intelligent wireless networks,'' \emph{IEEE Commun. Mag.}, vol. 63, no. 5, pp. 40--45, May 2025.

\bibitem{Meta-Fluid lot}
B. Liu {\em et al.}, ``Meta Fluid Antenna: Architecture Design, Performance Analysis, Experimental Examination,'' \emph{IEEE Internet Things J.}, early access, 2026, doi: 10.1109/JIOT.2026.3709737. 


\bibitem{Ramirez2024TWC}
P. Ramírez-Espinosa, D. Morales-Jimenez, and K.-K. Wong, ``A new spatial block-correlation model for fluid antenna systems,'' \emph{IEEE Trans. Wireless Commun.}, vol. 23, no. 11, pp. 15829--15843, Nov. 2024.

\bibitem{Zhang2025SBAR}
Z. Zhang, J. Zhu, L. Dai, and R. W. Heath, Jr., ``Successive Bayesian reconstructor for channel estimation in fluid antenna systems,'' \emph{IEEE Trans. Wireless Commun.}, vol. 24, no. 3, pp. 1992--2006, Mar. 2025.

\bibitem{New2026JSAC}
W. K. New {\em et al.}, ``Fluid antenna systems: Redefining reconfigurable wireless communications,'' \emph{IEEE J. Sel. Areas Commun.}, vol. 44, pp. 1013--1044, Nov. 2025.


\bibitem{Chai2022CL}
Z. Chai, K.-K. Wong, K.-F. Tong, Y. Chen, and Y. Zhang, ``Port selection for fluid antenna systems,'' \emph{IEEE Commun. Lett.}, vol. 26, no. 5, pp. 1180--1184, May 2022.


\bibitem{Wong2022TWC}
K.-K. Wong and K.-F. Tong, ``Fluid antenna multiple access,'' \emph{IEEE Trans. Wireless Commun.}, vol. 21, no. 7, pp. 4801--4815, Jul. 2022.

\bibitem{Wong2023SlowFAMA}
K.-K. Wong, D. Morales-Jimenez, K.-F. Tong, and C.-B. Chae, ``Slow fluid antenna multiple access,'' \emph{IEEE Trans. Commun.}, vol. 71, no. 5, pp. 2831--2846, May 2023.

\bibitem{Wong2023FastFAMA}
K.-K. Wong, K.-F. Tong, Y. Chen, and Y. Zhang, ``Fast fluid antenna multiple access enabling massive connectivity,'' \emph{IEEE Commun. Lett.}, vol. 27, no. 2, pp. 711--715, Feb. 2023.

\bibitem{Waqar2023CL}
N. Waqar, K.-K. Wong, K.-F. Tong, A. Sharples, and Y. Zhang, ``Deep learning enabled slow fluid antenna multiple access,'' \emph{IEEE Commun. Lett.}, vol. 27, no. 3, pp. 861--865, Mar. 2023.

\bibitem{Eskandari2024WCL}
M. Eskandari, A. G. Burr, K. Cumanan, and K.-K. Wong, ``cGAN-based slow fluid antenna multiple access,'' \emph{IEEE Wireless Commun. Lett.}, vol. 13, no. 10, pp. 2907--2911, Oct. 2024.

\bibitem{Wu2026CoNOMA}
T. Wu {\em et al.}, ``Unleashing more potential from FAS: A framework of FAS-CoNOMA systems,'' \emph{IEEE Trans. Commun.}, vol. 74, pp. 4820--4836, Feb. 2026.

\bibitem{Wu2026Secrecy}
T. Wu {\em et al.}, ``Variable block-correlation modeling and optimization for secrecy analysis in fluid antenna systems,'' \emph{IEEE Trans. Wireless Commun.}, vol. 25, pp. 15069--15085, Apr. 2026.

\bibitem{Wu2026ScalableFAS}
T. Wu {\em et al.}, ``Scalable fluid antenna systems: A new paradigm for array signal processing,'' \emph{IEEE J. Sel. Topics Signal Process.}, vol. 20, no. 3, pp. 389--406, Apr. 2026.

\bibitem{Wang2024AIFAS}
C. Wang, Z. Li, K.-K. Wong, R. Murch, C.-B. Chae, and S. Jin, ``AI-empowered fluid antenna systems: Opportunities, challenges, and future directions,'' \emph{IEEE Wireless Commun.}, vol. 31, no. 5, pp. 34--41, Oct. 2024.

\bibitem{Zhu2024MAopportunities}
L. Zhu, W. Ma, and R. Zhang, ``Movable antennas for wireless communication: Opportunities and challenges,'' \emph{IEEE Commun. Mag.}, vol. 62, no. 6, pp. 114--120, Jun. 2024.

\bibitem{New2025TWC}
W. K. New, K.-K. Wong, H. Xu, F. R. Ghadi, R. Murch, and C.-B. Chae, ``Channel estimation and reconstruction in fluid antenna system: Oversampling is essential,'' \emph{IEEE Trans. Wireless Commun.}, vol. 24, no. 1, pp. 309--322, Jan. 2025.

\bibitem{Xu2024CL}
H. Xu {\em et al.}, ``Channel estimation for FAS-assisted multiuser mmWave systems,'' \emph{IEEE Commun. Lett.}, vol. 28, no. 3, pp. 632--636, Mar. 2024.


\bibitem{Meta-Fluid}
D. Ma, B. Liu, J. Yang, Y. Fang, T. Wu, and K.-F. Tong, ``Meta-Fluid Antenna-Enabled Multi-User ISAC: Architecture Implementation, Algorithm Design, and Full-Wave Validation,'' \emph{IEEE J. Sel. Topics Signal Process.}, vol. 20, no. 3, pp. 247--261, Apr. 2026. 

\bibitem{Xu2025WCL}
B. Xu, Y. Chen, Q. Cui, X. Tao, and K.-K. Wong, ``Sparse Bayesian learning-based channel estimation for fluid antenna systems,'' \emph{IEEE Wireless Commun. Lett.}, vol. 14, no. 2, pp. 325--329, Feb. 2025.

\bibitem{Hassibi2003}
B. Hassibi and B. M. Hochwald, ``How much training is needed in multiple-antenna wireless links?'' \emph{IEEE Trans. Inf. Theory}, vol. 49, no. 4, pp. 951--963, Apr. 2003.

\bibitem{Biguesh2006}
M. Biguesh and A. B. Gershman, ``Training-based MIMO channel estimation: A study of estimator tradeoffs and optimal training signals,'' \emph{IEEE Trans. Signal Process.}, vol. 54, no. 3, pp. 884--893, Mar. 2006.

\bibitem{Donoho2006}
D. L. Donoho, ``Compressed sensing,'' \emph{IEEE Trans. Inf. Theory}, vol. 52, no. 4, pp. 1289--1306, Apr. 2006.

\bibitem{Tipping2001}
M. E. Tipping, ``Sparse Bayesian learning and the relevance vector machine,'' \emph{J. Mach. Learn. Res.}, vol. 1, pp. 211--244, 2001.


\bibitem{NEW-XL-MIMO}
A. Tang {\em et al.}, ``Revisiting XL-MIMO channel estimation: When dual-wideband effects meet near field,'' \emph{IEEE Trans. Wireless Commun.}, vol. 25, pp. 4231--4247, 2026.

\bibitem{NEW-MODULAR-XL}
Y. Zhang, R. Li, C. Pan, H. Ren, T. Wu, and C. Wang, ``Modular XL-array-enabled channel estimation and localization in terahertz systems,'' \emph{IEEE Trans. Cognit. Commun. Netw.}, vol. 12, pp. 3378--3392, 2026.


\bibitem{Rasmussen2006}
C. E. Rasmussen and C. K. I. Williams, \emph{Gaussian Processes for Machine Learning}. Cambridge, MA, USA: MIT Press, 2006.

\bibitem{Haibin2024WCL}
H. Zhang {\em et al.}, ``Learning-induced channel extrapolation for fluid antenna systems using asymmetric graph masked autoencoder,'' \emph{IEEE Wireless Commun. Lett.}, vol. 13, no. 6, pp. 1665--1669, Jun. 2024.

\bibitem{Arvinte2023TWC}
M. Arvinte and J. I. Tamir, ``MIMO channel estimation using score-based generative models,'' \emph{IEEE Trans. Wireless Commun.}, vol. 22, no. 6, pp. 3698--3713, Jun. 2023.

\bibitem{Chen2025TCOM}
Z. Chen, H. Shin, and A. Nallanathan, ``Generative diffusion model-based variational inference for MIMO channel estimation,'' \emph{IEEE Trans. Commun.}, vol. 73, no. 10, pp. 9254--9269, Oct. 2025.

\bibitem{Ho2020}
J. Ho, A. Jain, and P. Abbeel, ``Denoising diffusion probabilistic models,'' in \emph{Adv. Neural Inf. Process. Syst. (NeurIPS)}, pp. 6840--6851, 2020.

\bibitem{Song2020}
J. Song, C. Meng, and S. Ermon, ``Denoising diffusion implicit models,'' in \emph{Proc. ICLR}, 2021.

\bibitem{Clarke1968}
R. H. Clarke, ``A statistical theory of mobile-radio reception,'' \emph{Bell Syst. Tech. J.}, vol. 47, no. 6, pp. 957--1000, Jul.--Aug. 1968.

\bibitem{Jakes1974}
W. C. Jakes and D. C. Cox, \emph{Microwave Mobile Communications}. New York, NY, USA: Wiley, 1994.

\bibitem{Jaeckel2014}
S. Jaeckel, L. Raschkowski, K. B\"orner, and L. Thiele, ``QuaDRiGa: A 3-D multi-cell channel model with time evolution for enabling virtual field trials,'' \emph{IEEE Trans. Antennas Propag.}, vol. 62, no. 6, pp. 3242--3256, Jun. 2014.

\bibitem{Ronneberger2015}
O. Ronneberger, P. Fischer, and T. Brox, ``U-Net: Convolutional networks for biomedical image segmentation,'' in \emph{Medical Image Computing and Computer-Assisted Intervention (MICCAI)}, pp. 234--241, 2015.




 
\end{thebibliography}
\end{document}